\newtheorem{lemma}{Lemma}
\newtheorem{remark}{Remark}
\DeclareMathOperator*{\argmin}{argmin}
\def\BibTeX{{\rm B\kern-.05em{\sc i\kern-.025em b}\kern-.08em
    T\kern-.1667em\lower.7ex\hbox{E}\kern-.125emX}}
\begin{document}
\title{Liquid Lens-Based Imaging Receiver for MIMO VLC Systems\\}

\author{Kapila~W.~S.~Palitharathna,~\IEEEmembership{Member,~IEEE,} Christodoulos~Skouroumounis,~\IEEEmembership{Senior~Member,~IEEE,} and
Ioannis~Krikidis,~\IEEEmembership{Fellow,~IEEE}
\thanks{The authors are with the IRIDA Research Centre for Communication Technologies, Department of Electrical and Computer Engineering, University of Cyprus, 1678 Nicosia, Cyprus (e-mails: \{palitharathna.kapila, cskour03, krikidis\}@ucy.ac.cy).}\vspace{-10mm}
\thanks{This work received funding from the European Research Council (ERC) under the European Union's Horizon 2020 research and innovation programme (Grant agreement No. 819819) and from the Smart Networks and Services Joint Undertaking (SNS JU) under the European Union's Horizon Europe research and innovation programme (Grant Agreement No 101192080). It was also funded by the European Union Marie Skłodowska-Curie Actions Project COALESCE under Grant 101130739.}}

\maketitle

\begin{abstract}
In this paper, we consider a tunable liquid convex lens-assisted imaging receiver for indoor multiple-input multiple-output (MIMO) visible light communication (VLC) systems. In contrast to existing MIMO VLC receivers that rely on fixed optical lenses, the proposed receiver leverages the additional degrees of freedom offered by liquid lenses via adjusting both focal length and orientation angles of the lens. This capability facilitates the mitigation of spatial correlation between the channel gains, thereby enhancing the overall signal quality and leading to improved bit-error rate (BER) performance. We present an accurate channel model for the liquid lens-assisted VLC system by using three-dimensional geometry and geometric optics. To achieve optimal performance under practical conditions such as random receiver orientation and user mobility, optimization of both focal length and orientation angles of the lens are required. To this end, driven by the fact that channel models are mathematically complex, we present two optimization schemes including a blockwise machine learning (ML) architecture that includes convolution layers to extract spatial features from the received signal, long-short term memory layers to predict the user position and orientation, and fully connected layers to estimate the optimal lens parameters. Numerical results are presented to compare the performance of each scheme with conventional receivers. Results show that a significant BER improvement is achieved when liquid lenses and presented ML-based optimization approaches are used. Specifically, the BER can be improved from $6\times 10^{-2}$ to $1.4\times 10^{-3}$ at an average signal-to-noise ratio of $30$ dB.
\end{abstract}

\begin{IEEEkeywords}
Visible light communication, multiple-input multiple-output, liquid lens, imaging receiver.
\end{IEEEkeywords}

\vspace{-5mm}
\section{Introduction}
Next-generation wireless networks are expected to provide exceptionally high data rates, ultra-low latency, and enhanced security, surpassing the performance capabilities of current wireless systems. In this context, visible light communication (VLC) is positioned to play a pivotal role in future wireless networks, serving as a promising technology that offers high data rates, extremely low latency, and improved security, especially for indoor short-range communication systems. Compared to radio frequency (RF) communication, VLC can achieve data rates of several terabits per second in indoor environments, leveraging the extensive wavelength range of the visible light spectrum, which spans from 380 nm to 780 nm~\cite{Ghassemlooy}. Moreover, VLC systems are capable of simultaneously providing both energy-efficient lighting and high-speed indoor communication by using low-cost light-emitting diodes (LEDs) as transmitters, while photodiodes (PDs) are used as receivers.

Another significant challenge that prominently emerges within the landscape of future wireless networks is meeting the growing demand for seamless communication in an increasingly interconnected world. In response to this evolving landscape, extensive research efforts have been dedicated to advancing multi-user/multi-antenna communication technologies. Often referred to as multiple-input multiple-output (MIMO) systems, these technologies are designed to elevate spectral efficiency through employing sophisticated spatial multiplexing techniques \cite{Renzo_2014}. In the context of VLC networks, MIMO has emerged as a promising solution to enhance data rates and improve link reliability by leveraging the inherent high directivity of LED light patterns, coupled with the strategic spatial placement of LEDs and PDs. In particular, the inter-LED distance, LED placement geometry (\textit{e.g.,} rectangular, circular, grid like), and the receiver geometry (\textit{e.g.,} planer, hemispherical, pyramid) significantly influence the performance~\cite{chockalingam_2015, Asanka_2015}. In the realm of MIMO VLC systems, the spatial modulation (SM) technique has garnered significant attention due to its robust performance in highly correlated channel conditions, especially when compared to the spatial multiplexing (SMP) technique~\cite{Fath_2023}. As a generalized form of SM, generalized spatial modulation (GSM) offers superior transmission efficiency, making it a highly suitable candidate for MIMO VLC applications~\cite{chockalingam_2015}. Research has demonstrated that GSM outperforms other modulation schemes in terms of bit-error rate (BER) within the VLC domain~\cite{chockalingam_2015}. Various studies have since focused on enhancing the performance of GSM-based VLC systems~\cite{Nahhal_2021, Wang_2018, Chen_2021}. For instance, the work presented in~\cite{Nahhal_2021} introduces a flexible GSM scheme that dynamically adjusts modulation sizes across the LEDs and the number of active LEDs, thereby improving the average symbol error rate and spectral efficiency. Additionally, a spectral-efficient GSM-based hybrid dimming scheme utilizing layered asymmetrically clipped optical orthogonal frequency division multiplexing is proposed in~\cite{Wang_2018}, which integrates both spatial-domain and time-domain dimming strategies. Furthermore, in~\cite{Chen_2021}, a group-based LED selection mechanism is proposed to enhance link robustness by selectively excluding poorly performing LEDs from the communication process.

However, achieving full diversity and spatial multiplexing gains in MIMO systems requires maintaining a minimum antenna separation of at least half the operating wavelength, which presents a significant challenge for mobile devices constrained by physical space limitations. In the context of MIMO VLC systems, channel correlation is influenced by several factors, including the spatial positioning of LEDs and PDs, inter-element spacing, the radiation pattern of the LEDs, and the field-of-view (FoV) of the PDs~\cite{chockalingam_2015}. The inherently low inter-LED and inter-PD spacing in practical VLC transmitters and receivers, along with the use of planar transmitter/receiver arrays, are primary contributors to the pronounced degradation in BER performance. To mitigate correlation among the elements of the channel matrix, various techniques have been explored, including optimizing transmitter and receiver geometries, incorporating intelligent reflecting surfaces (IRSs), and employing optical lenses. These approaches have demonstrated significant improvements in reducing channel correlation and enhancing overall system performance~\cite{Asanka_2015,Shiyuan_2023,Shiyuan_2024,Sushanth_2018, Wang_2013, Jiang_2018}. In~\cite{Asanka_2015}, an angle diversity receiver utilizing pyramid and hemispherical geometries is proposed to enhance signal reception capabilities in VLC systems. The integration of IRSs within MIMO VLC frameworks is systematically analyzed in~\cite{Shiyuan_2023} and~\cite{Shiyuan_2024}, highlighting their potential to improve spatial multiplexing performance. Additionally, the adoption of an imaging receiver, equipped with an imaging lens, effectively mitigates performance degradation by minimizing channel correlation, as the lens facilitates the focusing of light beams emitted from LEDs onto distinct PDs \cite{Sushanth_2018}. The performance attributes of various lens types, including convex lenses, hemispherical lenses, and fish-eye lenses, have been rigorously examined in the literature, underscoring their respective contributions to optimizing the efficiency and reliability of VLC systems \cite{Sushanth_2018, Wang_2013, Jiang_2018}.

Recent advancements in optical lens technology have introduced adaptable and tunable liquid lens architectures, which offer promising potential for significantly enhancing communication efficiency by dynamically optimizing optical paths and improving signal quality~\cite{ABoagye_2023, Ndjiongue_2021, Ngatched_2021, Cheng_2021, Lee_2019, Zohrabi_2016, Tian_2022}. Among these innovations, liquid crystal-based structures that adjust the refractive index to manipulate light propagation direction have been thoroughly examined within the field of VLC systems~\cite{ABoagye_2023, Ndjiongue_2021}. In~\cite{ABoagye_2023}, a liquid crystal-based IRS is employed as a VLC transmitter to enhance data rate uniformity among users. Additionally, in~\cite{Ndjiongue_2021} and~\cite{Ngatched_2021}, a liquid crystal-based IRS is utilized at the receiver to dynamically steer light beams toward the effective area of the PD, thereby optimizing signal reception and improving overall communication performance. Although omitted in the context of VLC systems, several new non-mechanical liquid lens architectures have been proposed that can change the orientation and shape of the liquid surface and hence can control the light propagation direction~\cite{Cheng_2021, Lee_2019, Zohrabi_2016}. Numerous non-mechanical electro-wetting surface-based liquid lens architectures have been proposed in the literature, such as those detailed in~\cite{Cheng_2021}. The authors in~\cite{Lee_2019} introduced an innovative three-dimensional beam steering methodology that leverages an electro-wetting-based liquid lens in conjunction with a liquid prism, enhancing the precision of light manipulation.  In~\cite{Zohrabi_2016}, the authors presented techniques for one- and two-dimensional beam steering employing multiple tunable liquid lenses, demonstrating significant advancements in beam control capabilities. Furthermore, recent research has explored the integration of liquid lens systems with mechanical structures to achieve enhanced dynamic beam steering functionalities~\cite{Tian_2022,Zhang_2025,Lv_2025}. In~\cite{Tian_2022}, an adaptable liquid lens is studied which has three degrees of freedom \textit{i.e.}, focal length, azimuth angle, and polar angle. The adjustment of focal length is facilitated by the application of a vertical mechanical force on a ring positioned around the liquid, while a mechanical framework employs magnetic forces to enable tilting of the ring, allowing for precise modulation of both azimuth and polar angles~\cite{Tian_2022}. \textcolor{black}{Moreover, the study in~\cite{Zhang_2025} demonstrates that the tuning capability of liquid lenses operates within a timescale of milliseconds and is practically feasible for VLC systems. In~\cite{Lv_2025}, a bionic optical focusable imaging system that mimics the focusing mechanism of the human eye has been proposed, featuring fast tuning capability.} It is identified that a combination of such liquid lenses with imaging receivers and optimization of lens parameters can provide a robust solution for MIMO VLC systems under dynamic conditions. In particular, by adjusting the focal length and orientation angles of the lens, the MIMO receivers can focus light beams coming from LEDs on to separated PDs such that interference is minimized and can achieve full multiplexing gains.

User mobility and random receiver orientations resulting from human activities, such as sitting and walking, significantly influence the performance of VLC systems, as these movements induce temporal variations in channel gains. Designing a receiver that maintains robust communication under these dynamic conditions presents considerable challenges. Furthermore, the ability to predict system parameters for future time instances and to optimize accordingly is critical, as it enables systems to mitigate processing delays and enhance overall communication efficiency. However, in practical indoor environments, user mobility and random receiver orientations often deviate from the assumptions underlying existing mathematical models. Consequently, traditional methods for estimating user positions and receiver orientations, along with parameter optimization schemes derived from theoretical models, may yield suboptimal performance gains. This discrepancy underscores the necessity for adaptive approaches that account for real-world dynamics in order to enhance the effectiveness of VLC systems. To address these challenges, data-driven machine learning (ML)-based solutions have emerged as promising methodologies~\cite{Arfaoui_2021, Kapila_2024, Kapila_2023}. In~\cite{Arfaoui_2021}, ML-based precise 3D positioning and orientation estimation techniques for VLC have been proposed. In addition, intelligent systems that can predict the user position and orientation for future time instance and accordingly optimize system parameters have received the attention of researchers~\cite{Kapila_2024, Kapila_2023}. In~\cite{Kapila_2024}, an ML-based user path and orientation prediction scheme has been proposed to optimize system parameters of an indoor VLC system. In~\cite{Kapila_2023}, average received power levels were used to predict the user blockage and accordingly optimized the beamforming matrix of VLC/RF hybrid systems. However, to the best of the authors' knowledge, this is the first study that uses liquid lenses for MIMO VLC systems and optimizes lens parameters using a prediction-based ML architecture to achieve better performance under user mobility and random receiver orientations.

In this paper, we investigate an indoor MIMO VLC system that incorporates a liquid convex lens-assisted imaging receiver, by utilizing GSM to facilitate information transmission. The use of GSM in our system is motivated by its superior BER performance in VLC systems compared to other MIMO schemes such as SMP, space shift keying (SSK), generalized SSK, and SM, as well as its high transmission efficiency~\cite{chockalingam_2015}. Additionally, GSM is a generalized scheme, with SSK and SMP serving as special cases of GSM. In contrast to the static conditions commonly examined in existing literature, this work focuses on dynamic environments characterized by user mobility and unpredictable receiver orientations. We propose the integration of an adjustable liquid convex lens at the receiver to enhance system performance and ensure superior coverage across a wide range of dynamic conditions. A comprehensive analysis of the channel gain under these practical scenarios is provided. To optimize system performance, we adjust the focal length and orientation angles of the liquid lens, employing two optimization schemes: a prediction-based block-wise ML architecture and a nearest LED selection approach. Additionally, we introduce two baseline techniques—exhaustive search and a vertically upward lens configuration—for comparative analysis. The ML architecture encompasses multiple components, including position and orientation estimation, position and orientation prediction, and optimization of lens parameters, demonstrating near-optimal results. Our contributions are summarized as follows:
\vspace{-3mm}
\begin{itemize}
    \item We propose a liquid convex lens model designed to enhance channel gains in the MIMO VLC system, while effectively minimizing channel gain correlation across a broad spectrum of user mobility and random receiver orientation scenarios. Furthermore, we establish a robust mathematical framework to accurately characterize the individual channel gains of the proposed system, incorporating key parameters such as user position, receiver orientation, lens focal length, and lens orientation. 
    \item We formulate an optimization problem to minimize the BER of the liquid lens-assisted MIMO VLC system by adjusting the focal length and the orientation angles of the liquid lens. In particular, the lens parameters are constrained by hardware limitations of the lens. To solve the formed problem, two optimization schemes including a prediction-based block-wise ML architecture (PBML) and the closest LED selection (CLS) are proposed. The PBML architecture includes a position and receiver orientation estimation, position and receiver orientation prediction, and parameter optimization blocks. Our ML architecture can be trained block-wise and is capable of predicting the position and orientation of the receiver and optimizing system parameters accordingly to achieve near optimal results.
    \item Results reveal that the use of liquid lenses in MIMO VLC systems is helpful to improve the BER performance under user mobility and random receiver orientation conditions. Moreover, the presented PBML architecture is capable of achieving near-optimal results and has a significant performance improvement compared to other schemes. Our solution can obtain optimal lens parameters in a low time duration and has shown robustness under a wide range of dynamic conditions. Also, the position and orientation prediction technique helps to apply the solution in many practical conditions where optimization for a future time instance is required.
\end{itemize}
The remainder of this article is organized as follows. In Section \ref{sec:system}, the system model and the associated liquid lens architecture, user mobility and random receiver orientation models are presented. Section \ref{sec:channel} characterizes the MIMO VLC channel and presents useful expressions to analyze the channel gains under dynamic conditions. In Section \ref{sec:BER}, the optimization problem to minimize the BER by adjusting lens parameters is presented and optimization schemes including an PBML architecture to obtain optimal lens parameters are also presented. Numerical results for various system parameters are presented in Section \ref{sec:results}. Finally, conclusions are drawn in Section \ref{sec:conclusion}.
\begin{figure}[!t]
    \centering
    \includegraphics[width=0.97\columnwidth]{"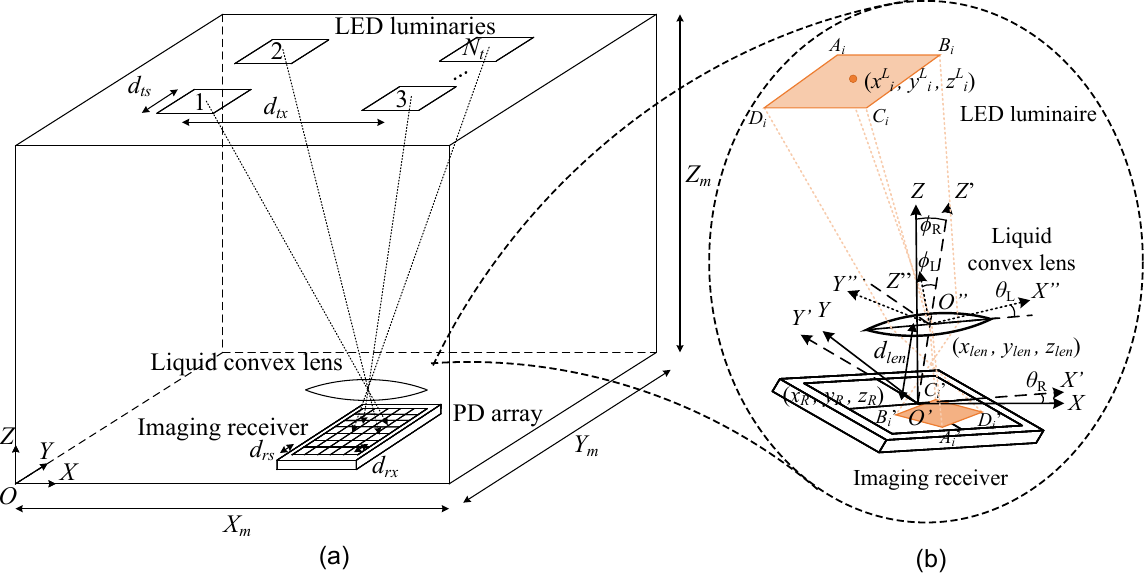"}
    \vspace{-3mm}
    \caption{(a) Liquid lens-assisted imaging receiver-based MIMO VLC system. (b) Liquid lens-assisted imaging receiver.}
    \label{fig:f1}
    \vspace{-6mm}
\end{figure}

\vspace{-6mm}
\section{System Model}\label{sec:system}
In this section, we provide the details of the considered MIMO VLC system model, the adopted liquid lens architecture, the employed MIMO technique, the channel model, the mobility and the receiver orientation models.

\vspace{-5mm}
\subsection{Network Topology}
We consider an indoor MIMO VLC system consisting of a ceiling mounted $N_t$ number of square LED luminaries, each with side length $d_{ts}$ and inter-LED distance of $d_{tx}$ that illuminate the room of size $X_m \text{m}\times Y_m \text{m}\times Z_m \text{m}$ and transmit data to a mobile device as shown in Fig. \ref{fig:f1}(a). Let $OXYZ$ be the room's coordinate frame. The coordinates of the geometric center of the $i$-th LED is $\hat{\boldsymbol{P}}_{i} = (x_i^L,y_i^L,z_i^L)$ with respect to (w.r.t.) $OXYZ$ frame. The mobile device is equipped with an imaging receiver which includes $N_r$ number of square PDs, each with side length $d_{rs}$ placed on a plane with inter-PD distance $d_{rx}$~\cite{Sushanth_2018}. The coordinates of the center of the PD plane are $\hat{\boldsymbol{P}}_{R} =(x_R,y_R,z_R)$ w.r.t. $OXYZ$ frame. We denote the receiver's coordinate frame as $O'X'Y'Z'$. In order to focus light beams on the PDs and to minimize interference, a reconfigurable convex lens~\cite{Sushanth_2018} is placed on the receiver such that its centroid is fixed at a distance of $d_{len}$ on the $z$-axis of the $O'X'Y'Z'$ frame as shown in Fig. \ref{fig:f1}(b). In this work, we consider the random receiver orientation and user mobility, and hence, the receiver's coordinate frame, $O'X'Y'Z'$, can be subjected to an azimuth angle of $\theta_R$ and polar angle of $\phi_R$ w.r.t. $OXYZ$ frame\footnote{The size of the PD array is selected such that the formed light spots can be focused on to the PD array for all the user positions and orientations.}. 

\subsection{{Liquid Lens Architecture}}
The proposed liquid lens architecture used in this paper is shown in Fig. \ref{fig:f3_lens}, in which a chamber is filled with a liquid and by changing the magnetic force applied on an annular metal ring on the surface, the orientation and the shape of the convex surface can be  adjusted~\cite{Tian_2022}. In the proposed liquid lens architecture, the focal length of the lens, $f$, can be adjusted by varying the vertical force applied on the annular ring~\cite{Tian_2022}. Moreover, the azimuth angle, $\theta_L$ and polar angle, $\phi_L$ of the lens can be adjusted w.r.t. $O'X'Y'Z'$ frame by varying the force applied on the magnet and the driving ring to focus light spots from the LEDs onto the imaging receiver, and hence, to improve the system's performance~\cite{Tian_2022}\footnote{\textcolor{black}{The liquid lens utilizes materials and fluids with low friction, enabling adjustments within several milliseconds~\cite{Zhang_2025}, which is shorter than the coherence time of a indoor VLC setup~\cite{soltani_2019}. Investigating the transient effects during lens adjustment represents an interesting direction for future research.}}. Further, we denote a coordinate frame in the center of the lens as $O''X''Y''Z''$ that will be useful in deriving channel gains. 
\begin{figure}[!t]
    \centering
    \includegraphics[width=0.96\columnwidth]{"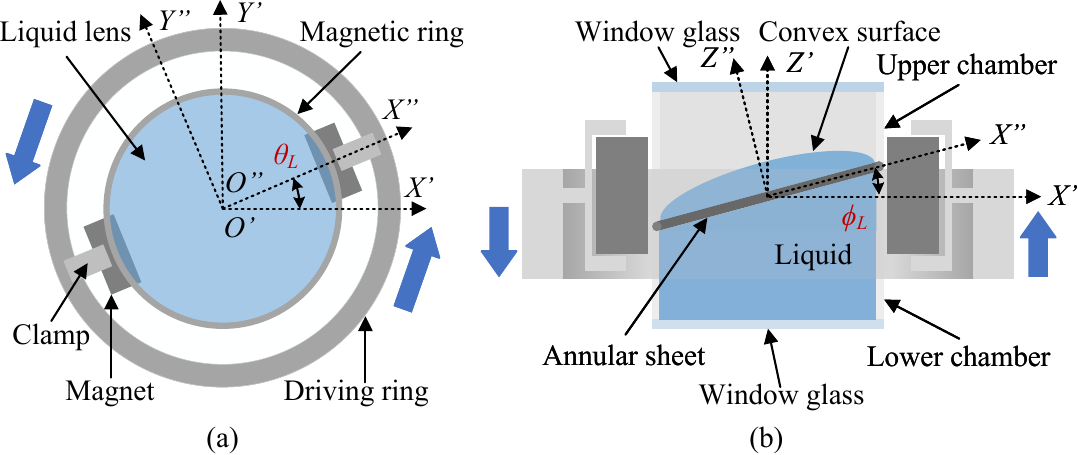"}
        \vspace{-4mm}
    \caption{Liquid lens structure and operation mechanism. (a). Top view. (b). Side view.}
    \label{fig:f3_lens}
    \vspace{-5mm}
\end{figure}

\vspace{-3mm}
\subsection{GSM Scheme}
We use GSM to convey information to the MIMO receiver. In GSM, bits are conveyed through modulation symbols sent on active LEDs as well as through LED activation patterns~\cite{chockalingam_2015}. In a single channel use, $N_a$ out of $N_t$ LEDs are selected to be activated, and each active LED emits an $M$-ary intensity modulated symbol from the set of intensity levels $\mathbb{M}$ where the $m$-th intensity level, $I_m$, can be expressed as~\cite{chockalingam_2015}
\begin{equation}
    I_m = \frac{2I_Pm}{M+1}, \quad \quad m = 1,2,\ldots, M,
    \label{equ:e4}
\end{equation}
where $M = |\mathbb{M}|$, and $I_P$ is the mean optical power emitted. Hence, the total number of bits conveyed per channel use {(bpcu)} is~\cite{chockalingam_2015}
\vspace{-3mm}
\begin{equation}
    \eta_{gsm} = \left\lfloor \log_2\left(\begin{array}{c}
        N_t  \\
        N_a
    \end{array}\right)\right\rfloor +N_a\lfloor\log_2 M\rfloor \quad \text{bpcu}.
    \label{equ:e5}
\end{equation}
The transmit signal vector of dimension $N_t\times 1$ is $\mathbf{x} = [x_1 \quad x_2 \quad \cdots \quad x_{N_t}]^T$, where $x_i\in\{\mathbb{M}\cup0\}$ is the transmit signal at the $i$-th luminary. The $N_r\times 1$ received signal vector $\mathbf{y}$ in the electrical domain at the receiver can be expressed as

\begin{equation}
    \mathbf{y} = \alpha r\mathbf{H}\mathbf{x}+\mathbf{n},
    \label{equ:e1}
\end{equation}
where $\alpha$ is the electrical-to-optical conversion efficiency of LEDs, and $r$ is the responsivity of PDs. $\mathbf{H}$ is the $N_r \times N_t$ dimentional optical channel gain matrix whose $(j,i)$-th element, $h_{i,j}$, is the optical channel gain from the $i$-th LED luminary to the $j$-th PD, and $\mathbf{n}=[n_1 \quad n_2 \quad \cdots \quad n_{N_r}]^T$ is the noise vector, where each element $n_j$ is real AWGN noise with zero mean and variance $\sigma^2$. 
The average signal-to-noise ratio (SNR) can be expressed as
\vspace{-3mm}
\begin{equation}
    \bar{\gamma}=\frac{\alpha^2r^2}{\sigma^2N_r}\sum_{j=1}^{N_r}\mathbb{E}\{(\mathbf{h}_j\mathbf{x})^2\},
    \label{equ:e2}
\end{equation}
where $\mathbf{h}_j$ is the $j$-th row of the matrix $\mathbf{H}$, and $\mathbb{E}\{\cdot\}$ is the expectation operator, respectively.

We use maximum likelihood detection at the receiver. Let $\mathbb{S}_{Tx} = \{\mathbf{x}_1, \mathbf{x}_2, \cdots, \mathbf{x}_L\}$ be the set of all possible transmit signal vectors for a given GSM scheme. The maximum likelihood detection rule for VLC MIMO systems can be expressed as 
\vspace{-1mm}
\begin{equation}
    \tilde{\mathbf{x}}=\argmin_{\mathbf{x}\in \mathbb{S}_{Tx}} ||\mathbf{y}-\alpha r \mathbf{H}\mathbf{x}||^2.
    \label{equ:e3}
\end{equation}

\vspace{-5mm}
\subsection{{Channel Model}}
In this subsection, we explain the light propagation model used in our setup. The optical channel gain $h_{i,j}$ from the $i$-th LED to the $j$-th PD can be expressed as~\cite{Sushanth_2018}

\begin{equation}
    h_{i,j} = h_{i}^{LoS}h_{i,j}^{len},
    \label{equ:e4}
\end{equation}
where $h_{i}^{LoS}$ is the line-of-sight (LoS) light propagation model from the $i$-th LED luminaire to the aperture of the imaging lens, and $h_{i,j}^{len}$ is the imaging channel gain between the $i$-th luminaire and the $j$-th PD. The LoS light propagation in an indoor VLC system is deterministic in nature and the well-known Lambertian model is used to obtain $h_{i}^{LoS}$~\cite{Fath_2023}. The LoS channel gain from $i$-th luminaire to the centroid of the aperture of the lens can be expressed as
	\begin{equation}\label{equ:e10}
		h_{i}^{LoS}	=
		\displaystyle\frac{(m+1)A_{L}}{2\pi d_{i}^2}\cos^m(\theta_{i})\cos(\phi_{i})\Pi\left(\frac{\phi_i}{\phi_{FoV}}\right),
	\end{equation}
where $m=-\ln(2)/\ln(\cos(\theta_{1/2}))$ is the Lambertian order of the LED, $A_{L}$ is the aperture area of the lens, $d_{i}$ is the Euclidean distance between the centroid of the $i$-th LED and the centroid of the lens, and $\theta_{1/2}$ is the half-power semi-angle of the LEDs. Furthermore, $\theta_{i}$ is the irradiance angle of the $i$-th LED, $\phi_{i}$ is the incident angle at the lens from the $i$-th LED, $\phi_{FoV}$ is the field-of-view (FoV) of the PD, and $\Pi(x)$ is the rectangular function which is zero for $|x|>0.5$ while one for $|x|\le 0.5$. To obtain $h_{i,j}^{len}$, we use a geometric approach as explained in the Subsection \ref{subsec:h_len}.

\vspace{-3mm}
\subsection{Mobility and Random Receiver Orientation}\label{sec:mobility}
The practical conditions such as mobility and random receiver orientation have a significant impact on the performance of the liquid lens-assisted MIMO VLC systems. In order to account for such conditions and to model in a more practical sense, we use the following user mobility and random receiver orientation models.

\subsubsection{Mobility Model}
To model human motion in an indoor environment, clothoid and optimal control-based models have been used~\cite{Kapila_2023,Kapila_2024}. A clothoid satisfies the differential equations given by 
    \begin{equation}
    \dot{\mathbf{x}}(s)=\cos{\theta_R(s)}, 
    \end{equation}
    \begin{equation}
    \dot{\mathbf{y}}(s)=\sin{\theta_R(s)}, 
    \end{equation}
and
    \vspace{-2mm}
    \begin{equation}
    \dot{\mathbf{\theta}}_R(s)=\kappa_0 +\kappa_1 s,
    \end{equation}
with the following initial conditions, \textit{i.e.}, 
    \begin{equation}
    x(0) = x_0, \quad  y(0) = y_0, \quad \theta_R(0) = \theta_0,
    \end{equation}
where $(x_0,y_0)$ is the initial position of the user, and $\theta_0$ is the initial azimuth angle, respectively. The parameter $\kappa_0$ is the initial curvature, $\kappa_1$ is the sharpness of the curve and $s$ represents the curvilinear abscissa. From this system, the parametric expressions of a clothoid coordinate can be defined as
	\begin{equation}
			\mathbf{x}(s)=x_0+\int_0^s \cos\left(\theta_0+\mu+\frac{1}{2}\kappa_1\mu^2\right)d\mu, \\
	\end{equation}
        \vspace{-3mm}
and
    \vspace{-1mm}
	\begin{equation}
		\mathbf{y}(s)=y_0+\int_0^s \sin\left(\theta_0+\mu+\frac{1}{2}\kappa_1\mu^2\right)d\mu, \\
\end{equation}
where $\mu$ is the integration variable that is used to integrate from the initial point to the curve length $s$. To model the randomness of the user movements, we assume that user trajectories are generated through a random Gaussian process with zero mean and variance $\sigma_p^2$. 

\subsubsection{Random Receiver Orientation Model}
To model the random receiver orientation, we consider a realistic model as presented in~\cite{soltani_2019}. In addition, the polar angle of the mobile device is independent of the azimuth angle and the user position. According to the model, the polar angle of the receiver for the walking scenario follows a Gaussian distribution, and adjacent samples are time-correlated. To capture this time-correlation, a correlated Gaussian random process can be used. A first-order linear autoregressive (AR) model is used to generate the $k$-th sample of the polar angle, which is expressed as
	\begin{equation}
		\phi^{k}_R = c_0 + c_1\phi^{k-1}_{R}+w^{k},
  \label{eq:AR_model}
	\end{equation}
where $c_0$ is the biased level, $c_1$ is the constant factor of the AR, and $w^{k}$ is the white Gaussian noise with variance $\sigma_w^2$~\cite{soltani_2019}. The parameters of the AR model in~\eqref{eq:AR_model} can be calculated as 
\begin{equation}
	c_0 = (1-c_1)\mathbb{E}(\phi^{k}_R), 
 \end{equation}
 \begin{equation}
c_1 = \mathbb{R}_{\phi_R}\bigg(\frac{T_{c,\phi_R}}{T_s}\bigg)^{\frac{T_s}{T_{c,\phi_R}}},
\end{equation}
and 
\begin{equation} 
 \sigma_w^2=(1-c_1^2)\sigma_{\phi_R}^2, 
\end{equation}
where $\mathbb{E}(\phi^{k}_{R})$ denotes the mean value of $\phi^{k}_{R}$, $\mathbb{R}_{\phi_R}(\cdot)$ is the auto-correlation function, $T_{c,\phi_R}$ is the coherence time, $T_{s}$ is the sampling time, and $\sigma_{\phi_R}^2$ is the variance of $\phi^{k}_R$. The azimuth angle at the receiver and for the $k$-th instance (\textit{i.e.}, $\theta^{k}_R$) is the angle between the user's moving direction and the $x$-axis in the room's coordinate frame $OXYZ$.

\section{MIMO VLC Channel Characterization}\label{sec:channel}
In this section, we present an in-depth analysis of the channel model for the liquid lens-assisted MIMO VLC system by employing three-dimensional geometric modeling and geometric optics principles. Initially, we derive the rotation matrices, the unit normal vectors of the receiver and lens, and the spatial positions of the PDs relative to the room’s coordinate frame. Finally, the LoS light propagation, $h_{i}^{LoS}$, and the imaging channel gain, $h_{i,j}^{len}$, are evaluated.
\vspace{-3mm}
\subsection{Preliminary results}
In this section, we state some preliminary results, which will assist in the derivation of the main analytical framework. To begin with, the unit normal vector to the receiver, $\hat{\boldsymbol{\eta}}_{R}$, is derived in the following Lemma.
\begin{lemma}\label{Lemma1}
    The unit normal vector of the PD plane, $\hat{\boldsymbol{\eta}}_{R}$, is given by
    \begin{equation}\label{equ:eta_R}
    \hat{\boldsymbol{\eta}}_{R}
    = 
    \begin{bmatrix}
    \text{c}{\theta_R}\text{s}{\phi_R}\\
    \text{s}{\theta_R}\text{s}{\phi_R}\\ 
    \text{c}{\phi_R}
    \end{bmatrix},
\end{equation}
where $\theta_R$ and $\phi_R$ represent the rotation angles of the receiver's coordinate frame around its $Z$- and $Y$-axes, respectively, $\text{c}\theta = \cos{\theta}$, and $\text{s}\theta = \sin{\theta}$.
\end{lemma}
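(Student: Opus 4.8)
The plan is to obtain $\hat{\boldsymbol{\eta}}_{R}$ by tracking how the body-fixed $Z'$-axis of the receiver, which by construction is the outward normal of the PD plane, is mapped into the room frame $OXYZ$ under the prescribed reorientation. In the reference configuration the frame $O'X'Y'Z'$ coincides with $OXYZ$ (the PDs lie in a plane orthogonal to the $z$-axis and the lens centroid sits on that axis), so before any rotation the normal is the standard basis vector $\hat{\boldsymbol{e}}_z=[0,\ 0,\ 1]^T$. The receiver's attitude is then specified by a polar rotation through $\phi_R$ about the $Y$-axis followed by an azimuthal rotation through $\theta_R$ about the $Z$-axis, so the normal expressed in the room frame is $\hat{\boldsymbol{\eta}}_{R}=\mathbf{R}_Z(\theta_R)\,\mathbf{R}_Y(\phi_R)\,\hat{\boldsymbol{e}}_z$, where $\mathbf{R}_Y(\cdot)$ and $\mathbf{R}_Z(\cdot)$ are the standard right-handed elementary rotation matrices about the $Y$- and $Z$-axes.

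The next step is a direct two-stage evaluation. A rotation about the $Y$-axis acts on $\hat{\boldsymbol{e}}_z$ by $\mathbf{R}_Y(\phi_R)\hat{\boldsymbol{e}}_z=[\,\text{s}\phi_R,\ 0,\ \text{c}\phi_R\,]^T$, and applying $\mathbf{R}_Z(\theta_R)$ to this vector — which rotates the first two components by $\theta_R$ and leaves the third unchanged — yields
\begin{equation}
\hat{\boldsymbol{\eta}}_{R}=\mathbf{R}_Z(\theta_R)\begin{bmatrix}\text{s}\phi_R\\ 0\\ \text{c}\phi_R\end{bmatrix}=\begin{bmatrix}\text{c}\theta_R\,\text{s}\phi_R\\ \text{s}\theta_R\,\text{s}\phi_R\\ \text{c}\phi_R\end{bmatrix},
\end{equation}
which is exactly \eqref{equ:eta_R}. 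Finally, $\|\hat{\boldsymbol{\eta}}_{R}\|^2=\text{s}^2\phi_R\left(\text{c}^2\theta_R+\text{s}^2\theta_R\right)+\text{c}^2\phi_R=1$, confirming that the expression is a genuine unit vector, as claimed.

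The only point that warrants care — rather than a true obstacle — is the composition order of the two rotations, since they do not commute. I would pin down $\mathbf{R}_Z(\theta_R)\,\mathbf{R}_Y(\phi_R)$ (as opposed to the reverse order) via two routes. First, physical limiting cases: when $\phi_R=0$ the PD plane is horizontal and its normal must equal $\hat{\boldsymbol{e}}_z$ for every azimuth, and the formula indeed returns $[0,0,1]^T$; when $\theta_R=0$ the tilt lies entirely in the $XZ$-plane and the normal becomes $[\text{s}\phi_R,0,\text{c}\phi_R]^T$, again as expected. Second, a direct reading: \eqref{equ:eta_R} is precisely the Cartesian form of the unit vector whose polar angle from the $Z$-axis is $\phi_R$ and whose azimuth in the $XY$-plane is $\theta_R$, which is the geometric meaning assigned to the two rotation angles. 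These consistency checks, together with the right-handedness convention for $\mathbf{R}_Y$ and $\mathbf{R}_Z$, fix the convention and complete the argument.
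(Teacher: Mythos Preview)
Your proof is correct and follows essentially the same route as the paper: the paper builds the passive frame-change matrix $^0\mathbf{R}_1(\theta_R,\phi_R)=\mathbf{R}_Y(\phi_R)\mathbf{R}_Z(\theta_R)$ and then applies its inverse to $[0,0,1]^T$, which is exactly your active composition $\mathbf{R}_Z(\theta_R)\mathbf{R}_Y(\phi_R)\hat{\boldsymbol{e}}_z$ written in the dual convention. Your added limiting-case and unit-norm checks are a nice bonus the paper omits.
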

\begin{proof}
    See Appendix \ref{Appendix1}.
\end{proof}
In the following Remark, we calculate the coordinates of the $j$-th PD i.e., $\hat{\boldsymbol{P}}_j$ and the center of the lens i.e., $\hat{\boldsymbol{P}}_{len}$, w.r.t. the room's coordinates $OXYZ$.
\begin{remark}
    The coordinates of the $j$-th PD, $\hat{\boldsymbol{P}}_j$ and of the center of the lens, $\hat{\boldsymbol{P}}_{len}$, w.r.t. the frame $OXYZ$, are given by
    \begin{equation}
    \hat{\boldsymbol{P}}_j =
    \begin{bmatrix}
    x_{j}\\
    y_{j}\\
    z_{j}
    \end{bmatrix}
    = 
    \begin{bmatrix}
    x_R+x_{j}^R\text{c}{\theta_R}\text{c}{\phi_R}-y_j^{R}\text{s}{\theta_R}\\
    y_R+x_{j}^R\text{s}{\theta_R}\text{c}{\phi_R}+y_j^{R}\text{c}{\theta_R}\\ 
    z_R-x_{j}^R\text{s}{\phi_R}
    \end{bmatrix},
\end{equation}
and
\begin{equation}
    \hat{\boldsymbol{P}}_{len} =
    \begin{bmatrix}
    x_{len}\\
    y_{len}\\
    z_{len}
    \end{bmatrix}
    = 
    \begin{bmatrix}
    x_R+d_{len}\text{c}{\theta_R}\text{s}{\phi_R}\\
    y_R+d_{len}\text{s}{\theta_R}\text{s}{\phi_R}\\ 
    z_R+d_{len}\text{c}{\phi_R}
    \end{bmatrix},
\end{equation}
respectively, where $[x_j^R\quad y_j^R \quad 0]^T$ be the coordinates of the $j$-th PD of the imaging receiver w.r.t. receiver's coordinate frame $O'X'Y'Z'$.
\end{remark}
\begin{proof}
    Let $[x_j^R\quad y_j^R \quad 0]^T$ be the coordinates of the $j$-th PD of the imaging receiver w.r.t. receiver's coordinate frame $O'X'Y'Z'$. We assume that the $X'$ axis of the receiver lies in the direction of the user mobility {for mathematical tractability.} The position of the $j$-th PD w.r.t. the frame $OXYZ$ can be found by the relation $\hat{\boldsymbol{P}}_j = {^0\mathbf{R}_1^{-1}(\theta_R,\phi_R)[x_j^R\quad y_j^R \quad 0]^T+\hat{\boldsymbol{P}}_R}$. In addition, the coordinates of the center of the lens w.r.t. room's coordinate frame can be similarly calculated as $\hat{\boldsymbol{P}}_{len} = {^0\mathbf{R}_1^{-1}(\theta_R,\phi_R)[0\quad 0 \quad d_{len}]^T+\hat{\boldsymbol{P}}_R}$.
\end{proof}
The unit normal vector of the lens, $\hat{\boldsymbol{\eta}}_{len}$, is derived in the following Lemma.
\begin{lemma}\label{Lemma2}
    The unit normal vector of the lens, $\hat{\boldsymbol{\eta}}_{len}$, is given by
    \begin{equation}\label{equ:eta_len}
\hspace{-5mm}
    \hat{\boldsymbol{\eta}}_{len}
    = 
    \begin{bmatrix}
     \text{c}{\theta_R}\text{c}{\phi_R}\text{c}{\theta_L}\text{s}{\phi_L}-\text{s}{\theta_R}\text{s}{\theta_L}\text{s}{\phi_L}+\text{c}{\theta_R}\text{s}{\phi_R}\text{c}{\phi_L}\\
     \text{s}{\theta_R}\text{c}{\phi_R}\text{c}{\theta_L}\text{s}{\phi_L}+\text{c}{\theta_R}\text{s}{\theta_L}\text{s}{\phi_L}+\text{s}{\theta_R}\text{s}{\phi_R}\text{c}{\phi_L}\\
     -\text{s}{\phi_R}\text{c}{\theta_L}\text{s}{\phi_L}+\text{c}{\phi_R}\text{c}{\phi_L}\\
    \end{bmatrix},
\end{equation}
where $\theta_L$ and $\phi_L$ depict the rotation angles of the lens around its $Z'$- and $Y'$-axes by using its tilting mechanism.
\end{lemma}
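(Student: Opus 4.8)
The plan is to obtain $\hat{\boldsymbol{\eta}}_{len}$, which is by definition the $Z''$-axis of the lens frame $O''X''Y''Z''$, by pushing the unit vector $[0,0,1]^T$ first into the receiver frame $O'X'Y'Z'$ and then into the room frame $OXYZ$, reusing the rotation convention already fixed in Lemma~\ref{Lemma1} and in the Remark.

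First I would introduce the lens-to-receiver rotation $^1\mathbf{R}_2(\theta_L,\phi_L)$. Since the lens frame is obtained from the receiver frame by a rotation of $\theta_L$ about $Z'$ followed by a rotation of $\phi_L$ about $Y'$, this matrix has exactly the same structure as $^0\mathbf{R}_1(\theta_R,\phi_R)$ with the pair $(\theta_R,\phi_R)$ replaced by $(\theta_L,\phi_L)$. Applying its inverse to $[0,0,1]^T$ then yields the lens normal expressed in receiver-frame coordinates; by the same computation performed for Lemma~\ref{Lemma1} (with the angles relabeled) this vector equals $[\text{c}{\theta_L}\text{s}{\phi_L},\, \text{s}{\theta_L}\text{s}{\phi_L},\, \text{c}{\phi_L}]^{T}$.

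Next I would map this vector into the room frame using the relation invoked in the proof of the Remark, namely that a receiver-frame vector $\boldsymbol{v}$ has room-frame representation $^0\mathbf{R}_1^{-1}(\theta_R,\phi_R)\boldsymbol{v}$. This gives
\begin{equation}
\hat{\boldsymbol{\eta}}_{len}
= {^0\mathbf{R}_1^{-1}(\theta_R,\phi_R)}\;{^1\mathbf{R}_2^{-1}(\theta_L,\phi_L)}
\begin{bmatrix}0\\0\\1\end{bmatrix}.
\end{equation}
Expanding this product of two elementary rotations and collecting terms reproduces the three entries of~\eqref{equ:eta_len}. Two cheap checks confirm the outcome: $\|\hat{\boldsymbol{\eta}}_{len}\|=1$, since it is a product of orthogonal matrices acting on a unit vector, and setting $\theta_L=\phi_L=0$ collapses the expression to $\hat{\boldsymbol{\eta}}_{R}$ of Lemma~\ref{Lemma1}, as it must.

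The only genuine difficulty is the bookkeeping of the rotation conventions, so that the composition stays consistent with Lemma~\ref{Lemma1} and the Remark: which frame plays the role of ``$0$'' and which of ``$1$'', whether the lens tilt multiplies on the left or on the right, and whether one uses $^0\mathbf{R}_1$ or $^0\mathbf{R}_1^{-1}$. Once the same convention as in Lemma~\ref{Lemma1} is adopted verbatim for the lens tilt, the remaining $3\times 3$ matrix multiplication is purely mechanical.
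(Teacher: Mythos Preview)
Your proposal is correct and follows essentially the same route as the paper: define $^1\mathbf{R}_2(\theta_L,\phi_L)=\mathbf{R}_{Y'}(\phi_L)\mathbf{R}_{Z'}(\theta_L)$ in direct analogy with $^0\mathbf{R}_1$, and then compute $\hat{\boldsymbol{\eta}}_{len}={^0\mathbf{R}_1^{-1}(\theta_R,\phi_R)}\,{^1\mathbf{R}_2^{-1}(\theta_L,\phi_L)}[0,0,1]^T$. Your intermediate evaluation of the lens normal in the receiver frame and the two sanity checks are additions beyond what the paper records, but the underlying argument is identical.
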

\begin{proof}
    See Appendix \ref{Appendix2}.
\end{proof}

\begin{figure}[!t]
    \centering
    \includegraphics[width=0.55\columnwidth]{"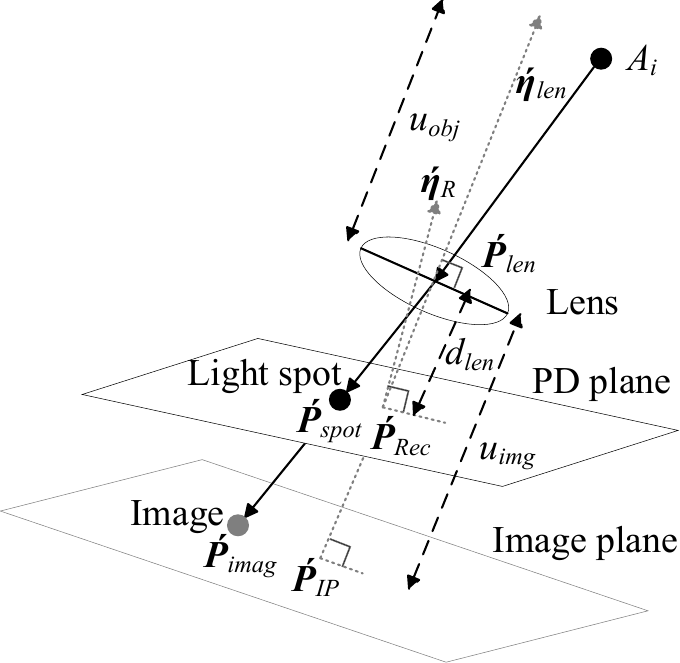"}
    \vspace{-2mm}
        \caption{Cross-sectional view of the light propagation from a point 
     on an LED and spot formation on the PD plane.}
    \label{fig:f3}
        \vspace{-5mm}
\end{figure}

\subsection{Calculation of $h_{i}^{LoS}$}
We firstly assess the LoS channel gain from the $i$-th luminaire to the centroid of the aperture of the lens. Towards this direction, $A_L$, $d_i$, $\theta_i$, and $\phi_i$ need to be calculated at each orientation and position instances of the receiver and the lens during the dynamic conditions. The effective area $A_L$ for a circular convex lens can be calculated with the simple relation $\pi r^2$, where $r = k_{\eta}f$ is the radius of the lens which is a function of $f$, and $k_{\eta}$ is a constant dependent on the refractive index of the liquid and the contact angle~\cite{Tian_2022}. The distance between the $i$-th luminary and the lens is $d_i= ||\hat{\boldsymbol{P}}_{i}-\hat{\boldsymbol{P}}_{len}||$. To calculate $\theta_i$ and $\phi_i$, first, we define the unit pointing vector from the midpoint of the lens to the midpoint of the $i$-th luminary that can be expressed as $\hat{\boldsymbol{\eta}}_{i,l}=\frac{\hat{\boldsymbol{P}}_{i}-\hat{\boldsymbol{P}}_{len}}{||\hat{\boldsymbol{P}}_{i}-\hat{\boldsymbol{P}}_{len}||}$. Now, the irradiance angle can be calculated as $\theta_i = \cos^{-1}\left(\hat{\boldsymbol{\eta}}_{i,l}\cdot[0 \quad 0 \quad 1]^T\right)$ and the incident angle is the angle between two unit vectors, $\hat{\boldsymbol{\eta}}_{i,l}$, and $\hat{\boldsymbol{\eta}}_{len}$ which is given by $\phi_i = \cos^{-1}\left(\hat{\boldsymbol{\eta}}_{i,l}\cdot\hat{\boldsymbol{\eta}}_{len}\right)$. Substituting these parameters into \eqref{equ:e10}, it can be expressed as
	\begin{equation}\label{equ:e101}
		h_{i}^{LoS}	=
		\displaystyle\frac{k_{LoS}f^2\left(z_i^L-z_R-d_{len}c\phi_R\right)^m}{ ||\hat{\boldsymbol{P}}_{i}-\hat{\boldsymbol{P}}_{len}||^{\frac{m+3}{2}}}\left(\hat{\boldsymbol{\eta}}_{i,l}\cdot\hat{\boldsymbol{\eta}}_{len}\right),
	\end{equation}
where $k_{LoS} = (m+1)k_{\eta}^2/2$ is a constant. 
    \vspace{-3mm}
\subsection{Calculation of $h_{i,j}^{len}$}\label{subsec:h_len}
In order to calculate $h_{i,j}^{len}$, the projected effective area of the $i$-th LED luminary on the $j$-th PD is required. Towards this direction, in the following Lemma, the coordinates of the focal point at the image plane and crossing point at the PD plane due to each vertex of $i$-th LED luminary are derived.
\begin{lemma}\label{Lemma3}
The coordinates of the focal point at the image plane and crossing point at the PD plane due to each vertex of $i$-th LED luminary are given by
\begin{equation}\label{equ:img2}
        \hat{\boldsymbol{P}}_{img,q_i} = \lambda_{ref}\hat{\boldsymbol{\eta}}_{ref,q_i}
 +\hat{\boldsymbol{\eta}}_{len},
    \end{equation}
        \vspace{-3mm}
and
    \begin{equation}\label{equ:img3}
        \hat{\boldsymbol{P}}_{spot,q_i} = \lambda_{spot}\hat{\boldsymbol{\eta}}_{ref,q_i}
 +\hat{\boldsymbol{\eta}}_{len},
    \end{equation}
respectively, where 
\begin{align}
        \hat{\boldsymbol{\eta}}_{ref,q_i} &= \frac{1}{n_{l}}\left[\hat{\boldsymbol{\eta}}_{len}\times (\hat{\boldsymbol{\eta}}_{len}\times \hat{\boldsymbol{\eta}}_{q_i,l}) \right] \nonumber \\
        &-\hat{\boldsymbol{\eta}}_{len}\sqrt{1-\frac{1}{n_{l}^2}(\hat{\boldsymbol{\eta}}_{len}\times\hat{\boldsymbol{\eta}}_{q_i,l})\cdot(\hat{\boldsymbol{\eta}}_{len}\times\hat{\boldsymbol{\eta}}_{q_i,l})},
    \end{align}
$n_1$ is the relative refractive index of the liquid, and $\lambda_{ref}$ is a constant.
\end{lemma}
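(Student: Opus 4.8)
The plan is to trace a single geometric ray leaving a vertex $q_i$ of the $i$-th square luminary, refract it once at the reconfigurable lens surface via the vector form of Snell's law, and then intersect the refracted ray first with the image plane and then with the PD plane to read off $\lambda_{ref}$ and $\lambda_{spot}$. Working in the lens-centred frame $O''X''Y''Z''$, I would let $\hat{\boldsymbol{\eta}}_{q_i,l}$ be the incident unit direction from the lens centroid toward the vertex $q_i$ (defined exactly as $\hat{\boldsymbol{\eta}}_{i,l}$ of Section~\ref{sec:channel} but with $\hat{\boldsymbol{P}}_i$ replaced by the vertex coordinates, which follow from the Remark), and let $\hat{\boldsymbol{\eta}}_{len}$ be the lens unit normal already obtained in Lemma~\ref{Lemma2}. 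The refracted ray then has the affine form $\boldsymbol{p}(\lambda)=\lambda\,\hat{\boldsymbol{\eta}}_{ref,q_i}+\hat{\boldsymbol{\eta}}_{len}$, so \eqref{equ:img2} and \eqref{equ:img3} reduce to computing $\hat{\boldsymbol{\eta}}_{ref,q_i}$ together with the two scalar parameters, and applying this to the four vertices yields the image and the footprint of the square luminary.

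First I would derive $\hat{\boldsymbol{\eta}}_{ref,q_i}$. Split the incident direction into a part parallel to $\hat{\boldsymbol{\eta}}_{len}$ and a tangential part; Snell's law at the air--liquid interface forces the tangential part of the refracted direction to be scaled by $1/n_l$ relative to the incident one, while the normal part is pinned down by demanding that $\hat{\boldsymbol{\eta}}_{ref,q_i}$ have unit length and point into the lens. Using the triple-product identity $\hat{\boldsymbol{\eta}}_{len}\times(\hat{\boldsymbol{\eta}}_{len}\times\hat{\boldsymbol{\eta}}_{q_i,l})=\hat{\boldsymbol{\eta}}_{len}(\hat{\boldsymbol{\eta}}_{len}\cdot\hat{\boldsymbol{\eta}}_{q_i,l})-\hat{\boldsymbol{\eta}}_{q_i,l}$ to rewrite the tangential part, together with $\|\hat{\boldsymbol{\eta}}_{len}\times\hat{\boldsymbol{\eta}}_{q_i,l}\|^2=1-(\hat{\boldsymbol{\eta}}_{len}\cdot\hat{\boldsymbol{\eta}}_{q_i,l})^2$ and the Snell relation for the refracted incidence angle, the normal contribution collapses to $-\hat{\boldsymbol{\eta}}_{len}\sqrt{1-\frac{1}{n_l^2}(\hat{\boldsymbol{\eta}}_{len}\times\hat{\boldsymbol{\eta}}_{q_i,l})\cdot(\hat{\boldsymbol{\eta}}_{len}\times\hat{\boldsymbol{\eta}}_{q_i,l})}$; adding the two contributions reproduces the stated expression for $\hat{\boldsymbol{\eta}}_{ref,q_i}$.

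Next I would fix $\lambda_{ref}$ and $\lambda_{spot}$ by ray--plane intersection. The image plane is taken perpendicular to the lens axis at the focal distance $f$ — the plane on which, by the focal property of the convex lens, the pencil of refracted rays from the luminary converges — and the PD plane is the detector plane of the Remark with normal $\hat{\boldsymbol{\eta}}_{R}$ from Lemma~\ref{Lemma1}. Each is of the form $\{\,\boldsymbol{p}:\boldsymbol{v}\cdot\boldsymbol{p}=c\,\}$ with a known normal $\boldsymbol{v}$ and offset $c$, so substituting $\boldsymbol{p}(\lambda)$ gives a single linear equation $\lambda\,(\boldsymbol{v}\cdot\hat{\boldsymbol{\eta}}_{ref,q_i})+\boldsymbol{v}\cdot\hat{\boldsymbol{\eta}}_{len}=c$, whose solution is the claimed scalar ($\lambda_{ref}$ for the image plane, $\lambda_{spot}$ for the PD plane), depending only on $f$, $d_{len}$, the receiver/lens orientation angles, and $n_l$.

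The main obstacle I anticipate is bookkeeping rather than conceptual: keeping the sign conventions in the vector Snell's law consistent with the orientation of $\hat{\boldsymbol{\eta}}_{len}$ fixed in Lemma~\ref{Lemma2}, so that $\hat{\boldsymbol{\eta}}_{ref,q_i}$ propagates toward the detector side and not back toward the LED, and transporting the incident and refracted directions correctly among the room frame $OXYZ$, the receiver frame $O'X'Y'Z'$, and the lens frame $O''X''Y''Z''$ through the rotation matrices introduced earlier. One should also verify that the incidence angle stays below the critical value (i.e. its sine is below $n_l$) over all admissible geometries so the square root remains real; this holds because $n_l>1$ at the air-to-liquid interface and the incidence angles are bounded through $\phi_{FoV}$.
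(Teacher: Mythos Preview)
Your proposal is correct and follows essentially the same route as the paper: vector Snell's law at the lens to obtain $\hat{\boldsymbol{\eta}}_{ref,q_i}$, then two ray--plane intersections (image plane orthogonal to $\hat{\boldsymbol{\eta}}_{len}$, PD plane orthogonal to $\hat{\boldsymbol{\eta}}_{R}$) to solve the linear equations for $\lambda_{ref}$ and $\lambda_{spot}$. The only additional ingredient in the paper's version is an explicit appeal to the thin-lens maker's equation and magnification $m=f/(f+d_{q_i,\hat{\boldsymbol{\eta}}_{len}})$ to locate the image plane, followed by the ``object at infinity'' approximation that places it at the focal plane---which is exactly where you put it directly.
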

\begin{proof}
See Appendix \ref{Appendix3}.
\end{proof}

This process can be iterated to obtain position vectors of $q_i, \forall i$ which are required to calculate $h_{i,j}^{len}$.

The area of the light spot formed for the $i$-th LED on the PD plane can be obtained by using the Shoelace formula\cite{Bart_1986}, and can be expressed as $\alpha_{i} = \frac{1}{2}\left(\lvert A'_i\times B'_i\rvert +\lvert B'_i\times C'_i\rvert+ \lvert C'_i\times D'_i\rvert+\lvert D'_i\times A'_i\rvert\right)$. Then, $h_{i,j}^{len}=\frac{\alpha_{i}\cap \beta_j}{\alpha_{i}}$, where $\beta_j$ is the area of the $j$-th PD, and $\alpha_{i}\cap \beta_j$ is the area of intersection between the light spot generated due the $i$-th LED and the $j$-th PD.

\section{BER Minimization}\label{sec:BER}
In this section, we formulate an optimization problem aimed at minimizing the BER by determining the optimal focal length and orientation angles of the proposed liquid lens. To address this problem, we propose two solution methodologies: a prediction-based blockwise ML framework and a nearest-LED selection strategy.

Initially, the exact BER corresponding to ML detection in \eqref{equ:e3} of the GSM-based VLC system is difficult to derive in closed-form. However, we derive a tight upper bound on the BER of the ML detector based on the pairwise error probability (PEP) analysis as in~\cite{chockalingam_2015}. The PEP which the receiver decides in favor of the signal vector $\mathbf{x}_m$ when $\mathbf{x}_n$ was transmitted can be written as~\cite{chockalingam_2015}
    \begin{equation}
        \mathrm{PEP}_{m,n} = \mathrm{PEP}(\mathbf{x}_{n}\rightarrow\mathbf{x}_{m}|\mathbf{H})=Q\left(\frac{\alpha r\Vert\mathbf{H}(\mathbf{x}_{m}-\mathbf{x}_{n})\Vert}{2\sigma}\right),
    \end{equation}
where $Q(x)=\frac{1}{\sqrt{2\pi}}\int_{x}^{\infty}\exp\left(-\frac{u^2}{2}\right)du $ is the Gaussian $Q$-function and the BER is upper bounded as~\cite{chockalingam_2015}
    \begin{align}\label{BER}
        \mathrm{BER} &\hspace{-0.5ex}\le \hspace{-0.5ex}\widetilde{\mathrm{BER}}\\
        &\hspace{-0.5ex}=\hspace{-0.5ex}\displaystyle\frac{1}{\eta_{gsm}2^{\eta_{gsm}}}\hspace{-1.5ex}\sum_{m=1}^{2^{\eta_{gsm}}}\hspace{-0.5ex}
        \sum_{n=1 \atop n\ne m}^{2^{\eta_{gsm}}} \hspace{-1ex}d_\mathbf{H}(\mathbf{x}_{m},\mathbf{x}_{n}) Q\left(\hspace{-0.5ex}\frac{r\Vert \mathbf{H}(\mathbf{x}_{m}\hspace{-0.5ex}-\hspace{-0.5ex}\mathbf{x}_{n})\Vert}{2\sigma}\hspace{-0.5ex}\right),\nonumber 
    \end{align}
where $d_\mathbf{H}(\mathbf{x},\mathbf{y})$ is the Hamming distance between $\mathbf{x}$ and $\mathbf{y}$. 

We consider the problem of minimizing the BER upper bound in \eqref{BER} with the focal length $f$, the azimuth angle $\theta_L$, and the polar angle $\phi_L$ of the reconfigurable liquid convex lens as the optimization variables. The corresponding optimization problem can be formulated as P1 \textit{i.e.,}
	\begin{mini!}|l|
		{\boldsymbol{p}, f,\theta_L, \phi_L}{\widetilde{\mathrm{BER}}(f,\theta_L,\phi_L)} 
		{\label{equ:e6}}{(\text{P1}) \quad \quad }
		\addConstraint{f^{min}\le f \le f^{max}}  
        \addConstraint{\theta_L^{min}\le \theta_L \le \theta_L^{max}}  
        \addConstraint{\phi_L^{min}\le \phi_L \le \phi_L^{max}},  
	\end{mini!}
where $\boldsymbol{p}$ denotes the indices of the certain LED activation.

\begin{figure*}[!t]
    \centering
    \includegraphics[width=0.8\textwidth]{"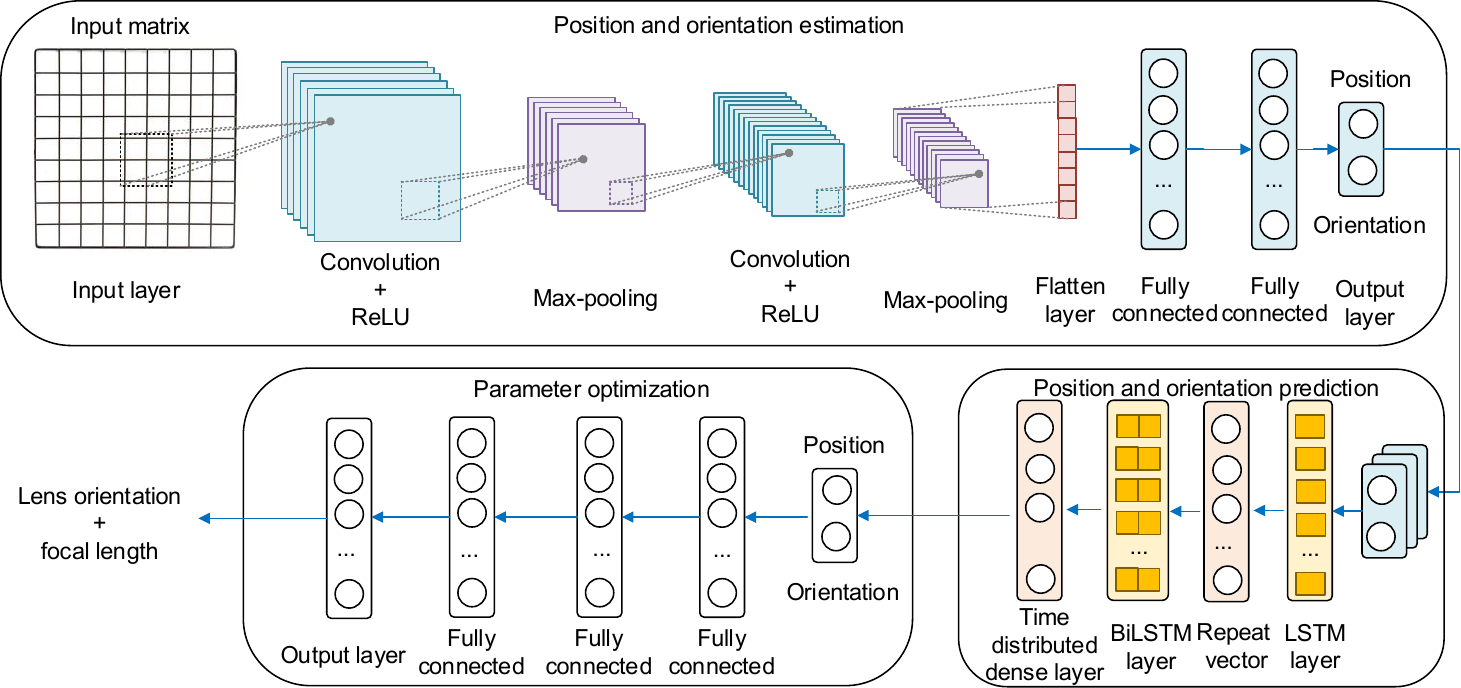"}
    \vspace{-2mm}
    \caption{Proposed CNN-LSTM-based ML architecture.}
    \label{fig:CNN}
    \vspace{-5mm}
\end{figure*}

Problem P1 is non-convex w.r.t. the optimization variables and can be solved using mixed monotonic programming~\cite{Matthiesen_2020}. Even though the number of optimization variables in P1 is small, this approach as well as other conventional optimization algorithms still have prohibitive computational complexity for online resource allocation. In particular, the channel gain calculations required to evaluate the objective of problem P1, are highly complex as they involve finding the intersection area of each light spot on the PD plane and each PD. Moreover, such calculations should be repeated for all possible $(f,\theta_L, \phi_L)$ pairs. On the other hand, prediction-based ML solutions are attractive for this problem, since they can predict optimized parameters for future time-instance, exploiting time correlated user positions and random receiver orientations-which is useful in online resource allocation~\cite{Kapila_2023}. The optimal parameter prediction for future time-instance is promising in this setup, as the system can adjust the lens with prior knowledge of optimal parameters, minimizing the communication performance gaps induced by the transient behavior of the lens. To this end, we propose two solution approaches including ($i$) the PBML scheme, and ($ii$) the CLS scheme, and subsequently compare their performance with two baseline schemes, namely vertical upward lens orientation (VULO) and exhaustive search.

\vspace{-1mm}
\subsection{PBML Scheme}

We present an artificial neural network (ANN)-based ML solution approach to obtain optimal lens parameters. To solve the P1 problem, different ML solutions including multi-layer perceptrons, convolution layers, long-short term memory (LSTM) layers, grated recurrent layers, and reinforcement learning can be used~\cite{Rekkas_2021}. However, our ML architecture is greatly motivated by presenting a lightweight ML architecture, that can limit the online resource allocation. Moreover, our ML approach is motivated to achieve twofold advances compared to other schemes, such as 1) the position and orientation can be estimated from the ML structure itself and no other position estimation mechanism is required\footnote{Since receiver position, receiver orientation, and channel states are not required for lens parameter optimization in PBML, it can be directly applied following the training process.}, and 2) predicting user positions and orientations for a future time instance and accordingly optimize system parameters for future time instance. To this end, a lightweight ANN comprised of three blocks is used as illustrated in Fig. \ref{fig:CNN}. The blockwise ANN is motivated to achieve higher training efficiency and to achieve better prediction accuracy~\cite{Zhong_2018}. The first block is responsible for position and orientation estimation. To this end, a convolutional neural network (CNN) which has several convolution layers followed by fully connected layers is used due to the CNN's ability to efficiently extract spatial features from a matrix~\cite{Zewen_2022}. The second block predicts the user position and orientation for a future time instance using previous samples of user position and orientation. This block is made with several LSTM layers, and bidirectional LSTM (BiLSTM) layers due to LSTM's/BiLSTM's ability to act as a memory and efficency in time series predictions~\cite{Sepp_1997}. The third block predicts the optimal lens orientation and the focal length for a given position and orientation using a ANN. The main input to the ANN is a matrix where each value represents the time averages received power at each PD. The outputs of the ANN include the orientation angles, the focal length of the liquid lens, and the LED activation pattern. In the following, we explain in detail each block of the ANN architecture.

\subsubsection{Position and Orientation Estimation}
This block estimates the user position and orientation by using the received average power matrix, denoted as $\mathbf{I}_{i}$ as the input. The value at the $(i_i,j_i)$-th index of the matrix $\mathbf{I}_{i}$ is calculated by averaging the received optical power at the $(i_i,j_i)$-th indexed PD over a time durations of $N_{TA}$ time slots. The dimension of the input matrix, $\mathbf{I}_{i}$, is $\sqrt{N_r}\times \sqrt{N_r}$. In order to extract spatial features from the received matrix, two convolution layers followed by two max-pooling layers, and flatten layer are used. First, $\mathbf{I}_{i}$ is 2D convoluted with $N_1$ number of $N_{k,1}\times N_{k,1}$ size kernels denoted by $\mathbf{K}_{c_1}$, where $c_1$ is the index of the kernal with no padding and stride equal to $1$, in order to generate $N_1$ number of features maps, each having the dimension of $(\sqrt{N_r}-N_{k,1}+1)\times (\sqrt{N_r}-N_{k,1}+1)$. From the 2D convolution, the $(i_{c_1},j_{c_1})$-th element of the $k_{c_1}$-th feature map can be expressed as
    \begin{align}\label{equ:conv1}
        \mathbf{F}_{k_{c_1}}[i_{c_1},j_{c_1}] &= \sum_{x_{i}=1}^{\sqrt{N_r}}\sum_{y_{i}=1}^{\sqrt{N_r}} \mathbf{K}_{c_1}[x_i,y_i]\nonumber \\
        &\times\mathbf{I}_{i}[i_{c_1}-x_i+1,j_{c_1}-y_i+1].
    \end{align}
Next, the rectified linear unit (ReLU) activation function is used, and the output is given by $\mathbf{F}^{ReLU}_{k_{c_1}}[i_{c_1},j_{c_1}] = \max\{\mathbf{F}_{k_{c_1}}[i_{c_1},j_{c_1}]+b_{i_{c_1},j_{c_1}},0\}$ where $b_{i_{c_1},j_{c_1}}$ is a bias. Output matrices of the first convolution layer, are then sent through an $N_1$ number of max-pooling layers of the dimension $N_{m,1}\times N_{m,1}$ with stride $N_{m,1}$ and the output dimensions are $\frac{1}{N_{m,1}}(\sqrt{N_r}-N_{k,1}+1)\times \frac{1}{N_{m,1}}(\sqrt{N_r}-N_{k,1}+1)$. In the second convolution layer, the outputs are convoluted with $N_2$ number of $N_{k,2} \times N_{k,2}$ size kernels with no padding and stride equal to 1, to generate $N_1\times N_2 $ number of convolution layers of size $(\frac{1}{N_{m,1}}(\sqrt{N_r}-N_{k,1}+1)-N_{k,2}+1)\times (\frac{1}{N_{m,1}}(\sqrt{N_r}-N_{k,1}+1)-N_{k,2}+1)$. From the 2D convolution, the $(i_{c_2},j_{c_2})$-th element of the $k_{c_2}$-th feature map, $\mathbf{F}_{k_{c_2}}[i_{c_2},j_{c_2}]$ can be expressed similar to \eqref{equ:conv1}, by replacing $i_{c_1}$ with $i_{c_2}$, $j_{c_1}$ with $j_{c_2}$, $\sqrt{N_r}$ with $\frac{1}{N_{m,1}}(\sqrt{N_r}-N_{k,1}+1)$, $\mathbf{K}_{c_1}$ with $\mathbf{K}_{c_2}$, and $\mathbf{I}_{i}$ with $\mathbf{I}_{c_2}$, where $\mathbf{K}_{c_2}$ is the kernel matrix, and $\mathbf{I}_{c_2}$ is the input matrices to the second convolution layer, respectively. Next, the ReLU activation function is used, and the output is $\mathbf{F}^{ReLU}_{k_{c_2}}[i_{c_2},j_{c_2}] = \max\{\mathbf{F}_{k_{c_2}}[i_{c_2},j_{c_2}]+b_{i_{c_2},j_{c_2}},0\}$. The generated matrices are sent through $N_1\times N_2$ second set of max-pooling layers which has the dimension of $N_{m,2}\times N_{m,2}$ with stride $N_{m,2}$, and hence, the output dimension is $\frac{1}{N_{m,2}}(\frac{1}{N_{m,1}}(\sqrt{N_r}-N_{k,1}+1)-N_{k,2}+1)\times \frac{1}{N_{m,2}}(\frac{1}{N_{m,1}}(\sqrt{N_r}-N_{k,1}+1)-N_{k,2}+1)$. Then, the generated feature map is converted to an 1D array of the size $N_1N_2(\frac{1}{N_{m,2}}(\frac{1}{N_{m,1}}(\sqrt{N_r}-N_{k,1}+1)-N_{k,2}+1))^2$ using a flatten layer. Next, the feature map is sent through two fully connected layers with a ReLU activation function and of sizes $N_{f,1}$ and $N_{f,2}$ to obtain the current receiver position and orientation. The number of nodes in the output layer is five. That includes $(x,y,z)$ coordinates of the receiver position as well as the $\theta_R$ and $\phi_R$.

\subsubsection{Position and Orientation Prediction}
This block is designed to predict the user position and orientation for a future time instance. In particular, we collect $N_I$ samples of estimated user position and orientation values upto the $n$-th time instance obtained from the position and orientation block as inputs and predict the user position and orientation for the $(n+1)$-th time instance. The input layer consists of $5\times N_I$ nodes while the output layer consists $5$ nodes. In order to obtain the user position and the orientation for the $(n+1)$-th time instance, a recurrent neural network consisting of a long-short term memory (LSTM) layer with $N_{l,1}$ elements, a repeat vector with $N_{r,1}$ nodes, a bidirectional LSTM (BiLSTM) with $N_{l,2}$ elements, and a time distributed dense layer with $N_{d,1}$ is used. The use of BiLSTM layers is helpful to obtain predicted user locations efficiently. In BiLSTM layers, two LSTMs are used for forward and backward propagation. Finally, the output layer provides $(x,y,z)$ coordinates of the receiver position as well as the $\theta_R$ and $\phi_R$ for $n$-th time instance.

\subsubsection{Parameter Optimization}
In this block, the predicted user positions and receiver orientations obtained from the position and orientation prediction block are used as inputs to obtain an optimized lens orientation and LED activation pattern. For this purpose, a regression model is implemented, where three dense layers with $N_{D1}$, $N_{D2}$, and $N_{D3}$ nodes are used in a cascade. The ReLU activation function is used in each layer. The output layer consists of $3$ nodes to obtain optimal $\theta_L$, $\phi_L$, and $f$.

The outputs of all dense layers are denoted as $\hat{\mathbf{x}}^{(D1)}\in\mathbb{R}^{N_{D1}}$, $\hat{\mathbf{x}}^{(D2)}\in\mathbb{R}^{N_{D2}}$, and $\hat{\mathbf{x}}^{(D3)}\in\mathbb{R}^{N_{D3}}$. They can be written as 

\begin{equation}
\hat{\mathbf{x}}^{(D1)}=f_{ReLU}\left(\mathbf{W}_{D1}\hat{\mathbf{x}}^{(5)}+\mathbf{k}_{D1}\right),
\end{equation}
\vspace{-5mm}
\begin{equation}
\hat{\mathbf{x}}^{(D2)}=f_{ReLU}\left(\mathbf{W}_{D2}\hat{\mathbf{x}}^{(D1)}+\mathbf{k}_{D2}\right),
\end{equation}
and
\begin{equation}
\hat{\mathbf{y}}=f_{ReLU}\left(\mathbf{W}_{D3}\hat{\mathbf{x}}^{(D2)}+\mathbf{k}_{D3}\right),
\end{equation}
where $\hat{\mathbf{x}}^{(5)}$ is the vector that includes predicted user positions and receiver orientations, $\mathbf{W}_{D1}\in\mathbb{R}^{(5\times N_{D1})}$, $\mathbf{W}_{D2}\in\mathbb{R}^{(N_{D1}\times N_{D2})}$, and $\mathbf{W}_{D3}\in\mathbb{R}^{(N_{D2}\times N_{D3})}$ are the weight vectors, $\mathbf{k}_{D1}\in\mathbb{R}^{N_{D1}}$, $\mathbf{k}_{D2}\in\mathbb{R}^{N_{D2}}$, and $\mathbf{k}_{D3}\in\mathbb{R}^{N_{D3}}$ are the bias vectors, and $\hat{\mathbf{y}}$ is the output vector.

\subsubsection{Training Phase}
In the training phase, block-wise training is considered. To train the position and orientation estimation block, a set of average received power matrices sampled during the user mobility and the time-correlated random receiver orientation as described in the Section \ref{sec:mobility} are used and corresponding $(x,y,z)$ coordinates of the receiver, $\theta_R$, and $\phi_R$ are used as training labels. To train the position and orientation prediction block, user positions and receiver orientation in a window of size $N_I$ are considered as inputs, while user positions and receiver orientations at the next time slot are considered as training labels. The regression model for parameter optimization is trained by considering user positions and receiver orientations at each time slot as inputs, while taking optimal $\theta_L$, $\phi_L$, and $f$ obtained from an exhaustive search for training labels. In the exhaustive search, to find the optimal parameters, an iterative search technique that includes coarse and fine-tuning was used.

To achieve training in each block, initial values for the network weights $\boldsymbol{\theta}$ are set. Forward propagation is applied to obtain the output vector $\hat{\mathbf{y}}$ for a selected input from the training set. The mean squared error (MSE) between $\hat{\mathbf{y}}$ and a vector containing optimum parameters $\mathbf{y_0}$ is used as the loss function, which is expressed as 
\begin{equation}
\phi(\boldsymbol{\theta}) = \frac{1}{B}\sum_{i_m=1}^{B}\Vert \hat{\mathbf{y}}- \mathbf{y_0}\Vert^2 , 
\end{equation}%
where $B$ is the mini-batch size of the training, $\mathbf{y_0}$ includes optimum training labels obtained from the exhaustive search, and $\boldsymbol{\theta}$ is updated for training batches using the stochastic gradient descent algorithm that can be expressed as $\boldsymbol{\theta}^+ := \boldsymbol{\theta}-\epsilon_L\nabla\phi(\boldsymbol{\theta})$, where $\epsilon_L$ is the learning rate.

It is noted that the position and orientation estimation block, and the parameter optimization can be trained one time prior to the use and no retraining is required again as the trained ML architecture is valid for a given LED placement and receiver architecture. However, the position and orientation prediction block may need frequent updates or retraining depending on the indoor environment, since this block's performance depends on human mobility and receiver orientation patterns. Upon the complete restructuring of the indoor environment (\textit{e.g.,} rearrangement of the lab environment, class orientation etc.) updates may be required to achieve high accuracy. However, our blockwise architecture avoids retraining the whole model.

\subsubsection{Computational Complexity Analysis}
Now, we explain the computational complexity involved with the PBML scheme. To this end, the number of calculations involved in each block in terms of multiplication operations and summation operations are presented. The total number of calculations involved in the block $B_i$ is expressed as $N_{B_i} = N_{B_i}^{mul} + N_{first}^{sum}$ in which the $N_{B_i}^{mul}$ is the number of multiplication operations, $N_{B_i}^{sum}$ is the summation operations, and $B_i= 1, 2,\text{and }3$ denotes the position and orientation estimation, the positions and orientation prediction, and parameter optimization blocks, respectively. $N_{1}^{mul}$ (excluding activation functions, max-pooling operations, and flattening operations which are low complex) is expressed as~\cite{Freire_2024}
\begin{align}\label{cc1}
    &N_{1}^{mul} = N_1N_{k,1}^2(\sqrt{N_r}-N_{k,1}+1)^2\nonumber \\
    &+N_1N_2N_{k,2}^2\left(\frac{1}{N_{m,1}}(\sqrt{N_r}-N_{k,1}+1)-N_{k,2}+1\right)^2\nonumber\\
    & + N_{f,1}N_1N_2\left(\frac{1}{N_{m,1}N_{m,2}}(\sqrt{N_r}-N_{k,1}+1)-N_{k,2}+1\right)^2 \nonumber \\
    &+ N_{f,1}N_{f,2}+5N_{f,2},
\end{align}
and $N_{1}^{sum}$ is
\begin{align}\label{cc2}
    N_{1}^{sum}\hspace{-0.5mm} &= \hspace{-0.5mm}N_1(\sqrt{N_r}-N_{k,1}+1)^2+\hspace{-0.5mm}N_{f,1}\hspace{-0.5mm}+\hspace{-0.5mm}N_{f,2}\hspace{-0.5mm}+\hspace{-0.5mm}5\\
    &+\hspace{-0.5mm}N_1N_2\left(\hspace{-0.5mm}\frac{1}{N_{m,1}}(\sqrt{N_r}-N_{k,1}+1)-N_{k,2}+1\hspace{-0.5mm}\right)^2\hspace{-1.5mm}.\nonumber
\end{align}
$N_{2}^{mul}$ (excluding activation functions) is expressed as~\cite{Freire_2024}
\begin{align}\label{cc3}
    N_{2}^{mul} \hspace{-0.5mm}&= \hspace{-0.5mm}4(6N_{l,1}+N_{l,1}^2)\hspace{-0.5mm}+\hspace{-0.5mm}8N_{r,1}(N_{l,1}(N_{l,1}\hspace{-0.5mm}+\hspace{-0.5mm}N_{l,2})+N_{l,2}^2)\nonumber\\
    &+5N_{l,2},
\end{align}
and $N_{2}^{sum}$ is
\begin{align}\label{cc4}
    N_{2}^{sum} \hspace{-0.5mm}=\hspace{-0.5mm} 4(6N_{l,1}\hspace{-0.5mm}\hspace{-0.5mm}+N_{l,1}^2)\hspace{-0.5mm}+\hspace{-0.5mm} 8N_{r,1}(N_{l,1}(N_{l,1}\hspace{-0.5mm}+\hspace{-0.5mm}N_{l,2})\hspace{-0.5mm}+\hspace{-0.5mm}N_{l,2}^2)\hspace{-0.5mm}+\hspace{-0.5mm}5.
\end{align}
$N_{3}^{mul}$ (excluding activation functions) can be expressed as\cite{Freire_2024}
\begin{align}\label{cc5}
    N_{3}^{mul} = 5N_{D1}+N_{D1}N_{D2} + N_{D2}N_{D3}+ 3N_{D3},
\end{align}   
and $N_{3}^{sum}$ is expressed as
\begin{align}\label{cc6}
    N_{3}^{sum} = N_{D1}+ N_{D2} + N_{D3}+3.
\end{align}
We note that despite expressions in \eqref{cc1}, \eqref{cc2}, \eqref{cc3}, \eqref{cc4}, \eqref{cc5}, and \eqref{cc6} are complex, the number of nodes in each layer are finite and small, and hence, the proposed PBML scheme can perform accurately in online optimization in a short time duration.

\subsection{CLS Scheme}
Towards achieving analytical tractability, this subsection introduces a low-complexity solution aimed at simplifying the performance evaluation process. This is motivated by the fact that the objective function of problem P1 is equivalent to maximizing the term, $\Vert \mathbf{H}(\mathbf{x}_{m}-\mathbf{x}_{n})\Vert$ where $\mathbf{x}_m$ and $\mathbf{x}_n$ are two distinct transmit signal vectors from the set of all possible transmit signal vectors. Hence, the sparsity of $\mathbf{H}(\alpha,\theta_L,\phi_L)$ needs to be increased. Furthermore, we observe that the objective function is often decreased the most by increasing the largest value of $\mathbf{H}(\alpha,\theta_L,\phi_L)$ even further. The largest value of the channel gain matrix $\mathbf{H}$ corresponds to the shortest LED-receiver link, and when the lens is oriented towards the closest LED, it maximizes this gain, thereby optimizing the system's BER performance. Based on this observation, we adjust the orientation angles of the lens, $\theta_L$ and $\phi_L$, such that the axis of the lens is aligned with the closest LED to the lens. This is achieved by selecting the LED that satisfies the condition
\begin{align}\label{l1}
	i^*=\argmin_{i}\ d_{i},
	\end{align}
where $d_i$ is the Euclidean distance from the $i$-th LED to the mid-point of the lens of the receiver. This alignment ensures that the lens focuses on the LED that provides the strongest received signal, improving the overall system performance. In the following Lemma, analytical expressions for the orientation angles are derived, conditioned on the location of the closest LED.
\begin{lemma}\label{Lemma4}
in the context of the proposed CLS scheme, the rotation angle of the lens around the $Y'$-and $Z'$-axes are given by
\begin{align}\label{eqs:quad1_sol_sim}
\phi_L = \cos^{-1}{\left(\text{c}{\phi_R}\left(\frac{z_{i^*}-z_{len}}{d_{i^*}}\right)\right)},
\end{align}
and
\vspace{-4mm}
\begin{align}\label{eqs:quad1_sol_sim2}
    \displaystyle \theta_L = \sin^{-1}\left(\frac{\text{c}{\theta_R} \left(\frac{y_{i^*}-y_{len}}{d_{i^*}}\right)- \text{s}{\theta_R}\left(\frac{x_{i^*}-x_{len}}{d_{i^*}}\right)}{\sqrt{1-\text{c}{\phi_R}\left(\frac{z_{i^*}-z_{len}}{d_{i^*}}\right)^2}}\right),
\end{align}
respectively, where $(x_{i^*}, y_{i^*},z_{i^*})$ depict the coordinate of the selected $i^*$-th LED, and $d_{i^*}= \sqrt{(x_{i^*}-x_{len})^2+(y_{i^*}-y_{len})^2+(z_{i^*}-z_{len})^2}$ denotes the distance to the closest LED from the lens.
\end{lemma}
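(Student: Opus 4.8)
The plan is to turn the defining principle of the CLS scheme into one vector identity and then invert it for the two lens angles. By construction the CLS scheme orients the lens so that its optical axis---equivalently its unit normal $\hat{\boldsymbol{\eta}}_{len}$---points at the closest LED $i^*$, i.e.\ along the unit pointing vector $\hat{\boldsymbol{\eta}}_{i^*,l}=(\hat{\boldsymbol{P}}_{i^*}-\hat{\boldsymbol{P}}_{len})/\lVert\hat{\boldsymbol{P}}_{i^*}-\hat{\boldsymbol{P}}_{len}\rVert$ already introduced for $h_{i}^{LoS}$. The condition to impose is therefore
\[
\hat{\boldsymbol{\eta}}_{len}=\hat{\boldsymbol{\eta}}_{i^*,l}=\frac{\hat{\boldsymbol{P}}_{i^*}-\hat{\boldsymbol{P}}_{len}}{d_{i^*}},
\]
whose right-hand side is known explicitly: its entries are $(x_{i^*}-x_{len})/d_{i^*}$, $(y_{i^*}-y_{len})/d_{i^*}$ and $(z_{i^*}-z_{len})/d_{i^*}$, with $\hat{\boldsymbol{P}}_{len}$ and $d_{i^*}$ taken from the earlier Remark. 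Substituting the closed form of $\hat{\boldsymbol{\eta}}_{len}$ from Lemma~\ref{Lemma2} into the left-hand side produces a $3\times 1$ trigonometric system whose only unknowns are $\theta_L$ and $\phi_L$.

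The core step is to decouple those two unknowns, and the key is the structure used to derive Lemma~\ref{Lemma2}: $\hat{\boldsymbol{\eta}}_{len}$ is nothing but the lens normal written in the receiver frame, $[\cos\theta_L\sin\phi_L,\ \sin\theta_L\sin\phi_L,\ \cos\phi_L]^{T}$, rotated into the room frame by the receiver rotation matrix. Left-multiplying the alignment identity by the inverse of that rotation---equivalently, forming the three corresponding linear combinations of the scalar equations by hand---moves the receiver rotation onto the known vector $\hat{\boldsymbol{\eta}}_{i^*,l}$ and leaves $[\cos\theta_L\sin\phi_L,\ \sin\theta_L\sin\phi_L,\ \cos\phi_L]^{T}$ on the other side. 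The $Z'$-component of the rotated identity then expresses $\cos\phi_L$ as a known function of $\theta_R$, $\phi_R$ and the coordinates of $i^*$, which I invert with the principal branch of $\cos^{-1}$ to obtain \eqref{eqs:quad1_sol_sim}. The $Y'$-component reads $\sin\theta_L\sin\phi_L=\cos\theta_R\,(y_{i^*}-y_{len})/d_{i^*}-\sin\theta_R\,(x_{i^*}-x_{len})/d_{i^*}$; substituting $\sin\phi_L=\sqrt{1-\cos^{2}\phi_L}$ with $\cos\phi_L$ from the previous step, and inverting with $\sin^{-1}$, yields \eqref{eqs:quad1_sol_sim2}. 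The $X'$-component does not have to be handled separately: both sides of the alignment identity are unit vectors, so once two components and the norms agree the third is automatic---it only serves as a consistency check.

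I expect the bookkeeping in that decoupling step to be the main obstacle: choosing exactly which linear combinations of the three rows of Lemma~\ref{Lemma2} isolate $\cos\phi_L$ and $\sin\theta_L\sin\phi_L$, and then repeatedly collapsing the resulting trigonometric products by the identity $\cos^{2}(\cdot)+\sin^{2}(\cdot)=1$. It is elementary but error-prone, since a sign slip propagates directly into both reported formulas. A secondary point I would settle is branch selection: $\cos^{-1}$ and $\sin^{-1}$ each admit a second solution, so I would argue that the branch reported in Lemma~\ref{Lemma4} is the one lying inside the hardware ranges $[\phi_L^{min},\phi_L^{max}]$ and $[\theta_L^{min},\theta_L^{max}]$ of problem~P1, using that for every feasible user position the closest LED lies within the lens' achievable steering cone, so that the alignment identity indeed admits a feasible solution.
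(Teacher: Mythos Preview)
Your approach is correct and is cleaner than the paper's own argument. Both begin from the alignment identity $\hat{\boldsymbol{\eta}}_{len}=\hat{\boldsymbol{\eta}}_{i^*,l}$, but the paper does not exploit the rotation-matrix structure you invoke. It works instead with the three scalar equations obtained by equating the rows of Lemma~\ref{Lemma2} to the components of $\hat{\boldsymbol{\eta}}_{i^*,l}$, forms the combination $\text{s}\theta_R\cdot(\text{row }1)-\text{c}\theta_R\cdot(\text{row }2)$ to isolate $\sin\theta_L\sin\phi_L$ (precisely your $Y'$-component), and then couples that with the raw third scalar equation---the room-frame $z$-component, not your $Z'$-component---to obtain a \emph{quadratic} in $\cos\phi_L$; one root is then discarded via an unproved appeal to ``geometrical observations'' that the square-root term vanishes. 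Your left-multiplication by ${}^0\mathbf{R}_1(\theta_R,\phi_R)$ bypasses all of this: the $Z'$-row delivers $\cos\phi_L$ linearly, so neither the quadratic nor the spurious root ever appears, and your remark on branch selection already goes further than the paper does. One caveat to anticipate: the $Z'$-component you obtain is the full inner product $\hat{\boldsymbol{\eta}}_R\cdot\hat{\boldsymbol{\eta}}_{i^*,l}$, a three-term expression in $(x_{i^*},y_{i^*},z_{i^*})$, not the single $z$-term displayed in \eqref{eqs:quad1_sol_sim}; this discrepancy is already visible inside the paper between its own quadratic solution and its stated final formula.
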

\begin{proof}
    See Appendix \ref{Appendix4}.
\end{proof}

In the context of the proposed CLS scheme, the focal length, $f$, is selected such that the light coming from the closest LED that goes through the lens is focused on to the plane of the PD array. In other words, $f$ is adjusted such that the distance between the lens and the light spot formed due to the closest LED is equal to $f$. This selection is motivated by the fact that the channel gain from the closest LED to a PD reaches its maximum value when its light spot is focused optimally on the PD plane, thereby increasing the norm in the BER expression and improving the BER value. Hence, the selected focal length in this scheme is
\vspace{-2mm}
\begin{align}\label{eqs:focal_len1}
f^* = \frac{d_{len}}{\hat{\boldsymbol{\eta}}_{len}\cdot \hat{\boldsymbol{\eta}}_{R}}.
\end{align}
With the help of \eqref{equ:eta_R} and \eqref{equ:eta_len}, \eqref{eqs:focal_len1} can be simplified as
\begin{align}\label{eqs:focal_len2}
f^* \hspace{-0.5ex}= \hspace{-0.5ex}\frac{d_{len}}{\text{c}{\theta_R}\text{s}{\phi_R}\left(\hspace{-0.5ex}\frac{x_{i^*}-x_{len}}{d_{i^*}}\hspace{-0.5ex}\right)\hspace{-0.5ex}+\hspace{-0.5ex} \text{s}{\theta_R}\text{s}{\phi_R}\left(\hspace{-0.5ex}\frac{y_{i^*}-y_{len}}{d_{i^*}}\hspace{-0.5ex}\right)\hspace{-0.5ex}+\hspace{-0.5ex} \text{c}{\phi_R}\left(\hspace{-0.5ex}\frac{z_{i^*}-z_{len}}{d_{i^*}}\hspace{-0.5ex}\right)}.
\end{align}

\vspace{-6mm}
\subsection{Benchmark Schemes}
In this subsection, we present two benchmark schemes to rigorously evaluate the effectiveness of the proposed optimization frameworks. The first scheme, denoted as VULO, maintains the lens in a fixed vertically upward position, regardless of user location or receiver orientation, thus serving as an extremely low-complexity solution. Conversely, the second benchmark, referred to as Exhaustive Search, provides the optimal solution by evaluating all possible configurations, thereby achieving the best performance at the expense of substantially increased computational complexity.

\subsubsection{VULO Scheme}
We consider orienting the axis of the lens of the receiver vertically upward along the $z$ axis of the room coordinate frame, that can be expressed as $\hat{\boldsymbol{\eta}}_{len} = [0 \quad 0 \quad 1]^T$. By obtaining three expressions for $x$, $y$, $z$ direction vectors, the following equations hold \textit{i.e.},
\begin{align}\label{eqs:362}
\text{c} {\theta_R}\text{c}{\phi_R}\text{c}{\theta_L}\text{s}{\phi_L}-\text{s}{\theta_R}\text{s}{\theta_L}\text{s}{\phi_L}+
\text{c}{\theta_R}\text{s}{\phi_R}\text{c}{\phi_L} = 0. 
\end{align}
\vspace{-3mm}
\begin{align}\label{eqs:372}
\text{s}{\theta_R}\text{c}{\phi_R}\text{c}{\theta_L}\text{s}{\phi_L}+\text{c}{\theta_R}\text{s}{\theta_L}\text{s}{\phi_L}+
\text{s}{\theta_R}\text{s}{\phi_R}\text{c}{\phi_L} = 0.
\end{align}
\vspace{-3mm}
\begin{align}\label{eqs:382}
-\text{s}{\phi_R}\text{c}{\theta_L}\text{s}{\phi_L}+\text{c}{\phi_R}\text{c}{\phi_L} = 1.
\end{align}
Next, we multiply \eqref{eqs:362} by $\text{c}\theta_R$ and \eqref{eqs:372} by $\text{s}\theta_R$. Then, by taking the addition of two resulting expressions, the following expression  is obtained \textit{i.e.},
\begin{align}\label{eqs:391}
\text{c}{\phi_R} \text{c}{\theta_L} \text{s}{\phi_L}+ \text{s}{\phi_R} \text{c}{\theta_L} = 0.
\end{align}
By substituting \eqref{eqs:382} into \eqref{eqs:391} and after some mathematical manipulations, we get two solutions such as \textit{Case 1:} $\phi_L = \phi_R$, and \textit{Case 2:} $\phi_L = -\phi_R$. Next, by substituting $\phi_R$ to \eqref{eqs:382}, the corresponding $\theta_L$ values can be obtained as \textit{Case 1:} $\theta_L = 180^{\circ}$, and \textit{Case 2:} $\theta_L = 0$.

In this scheme, $f^*$ is set equal to the distance from the lens to the receiver plane along the $z$ axis of the room's coordinate frame. This selection is motivated, since the vertical downward light coming to the lens is focused well on to the receiver plane in this scheme. The selected focus length in this scheme can be expressed as 
	\begin{align}\label{fl2}
	f^* = \frac{d_{len}}{\cos{\phi_R}}.
	\end{align}
 
\subsubsection{Exhaustive Search}
In this approach, the solution to problem P1 is found using an exhaustive search algorithm. The objective of P1 is calculated for possible combinations of the solution space and the optimal solution is found using a iterative approach as explained below. First, we divide the range of focal length, $f^{min}\le f \le f^{max}$, into $N_{f}$ discrete points, the range of azimuth angle of the lens, $\theta_{L}^{min}\le\theta_{L}\le\theta_{L}^{max}$, into $N_{\theta_L}$ discrete points, and the range of polar angle of the lens, $\phi_{L}^{min}\le\phi_{L}\le\phi_{L}^{max}$, into $N_{\phi_L}$ discrete points. The BER values for each sample point is calculated by using \eqref{BER}. Next, the minimum BER is found from the sample set. To fine tune the optimal $f$, $\theta_L$, and $\phi_L$ values, we iterate this step with smaller resolution around the optimal parameters obtained in the previous step until the BER improvement is less than a threshold, $\epsilon_{BER}$. 

	\begin{table}[!t]
		\caption{Simulation parameters.}
		\label{Tab:para}
		\centering\begin{tabular}{|c|c|c|c|c|c|}
			\hline
			Symbol & Value & Symbol & Value & Symbol & Value\\
			\hline
			$\rho$ & $1.5$ & $\sigma$ & $10^{-6}$ & $N_{k,2}$ & 2\\
			\hline
                $A_{L}$ & $1 \times 10^{-4}$ m$^{2}$ & $I_P$ & $1$ & $N_{m,2}$ & 1\\
                \hline
                $\Phi_{FoV}$ & $90^\circ$ & $f^{min}$ & $1$ cm & $N_{1}$ & 20 \\
                \hline
                $\theta_{1/2}$ & $60^\circ$ & $f^{max}$ & $15$ cm & $N_{2}$ & 20\\
                \hline
                $r$ & $0.75$~A/W & $\theta_L^{min}$ & $0^\circ$ & $N_{f,1}$ & 20\\
                \hline
                $N_t = N_r$ & $16$ & $\theta_L^{max}$ & $360^\circ$ & $N_{f,2}$ & 20\\
                \hline
                $\alpha$ & $1$ W/A & $\phi_{min}$ & $0^\circ$ & $N_{I}$ & 10\\
                \hline
                $d_{len}$ & $0.02$~m & $\phi_{max}$ & $30^\circ$ & $N_{l,1}$ & 10 \\
                \hline
                $d_s$ & $0.25$~m & $\sigma_{\phi_R}^2$ & $10^\circ$ & $N_{l,2}$ & 10 \\
                \hline
                $d_{tx}$ & $0.5$~m& $T_{c,\phi_R}$ & $1$ ms & $N_{d,1}$ & 10\\
                \hline
                $d_{rx}$ & $5$ mm & $k_{\eta}$ & $0.1$ & $N_{D1}$ & 30\\
                \hline
                $M$ & $2$ & $N_{k,1}$ & 2 & $N_{D2}$ & 40\\
                \hline
                $N_a$ & $2$ & $N_{m,1}$ & 1 & $N_{D3}$ & 30\\
                \hline
		\end{tabular}
      \vspace{-3mm}
	\end{table}

 \vspace{-2mm}
\section{Numerical Results and Discussion}\label{sec:results}
 
We present numerical results to verify the performance gains of the liquid lens-assisted imaging receiver-based MIMO VLC system and presented optimization schemes. The results are compared with MIMO VLC systems with/without static convex lenses and imaging receivers. A room dimension of $5\text{m}\times 5\text{m}\times3.5\text{m}$ is assumed. Unless stated explicitly, otherwise in all simulations, we have set the parameter values as given in Table \ref{Tab:para}. For the training of the ML model, a set of $10^7$ channel realizations were generated. A series of simulations were conducted to obtain suitable percentages for training, validation, and testing and were selected as $70\%$, $10\%$, and $20\%$. The evaluation was performed by MATLAB and Python running on a DELL XPS 15 7590 with a $4\times2.4$ GHz Intel Core i7 CPU and an NVIDIA GeForce GTX 1650 4GB GPU. The average execution time for the ML architecture was found to be 9.4 ms.

\begin{figure}[!t]
    \centering
    \includegraphics[width=0.99\columnwidth]{"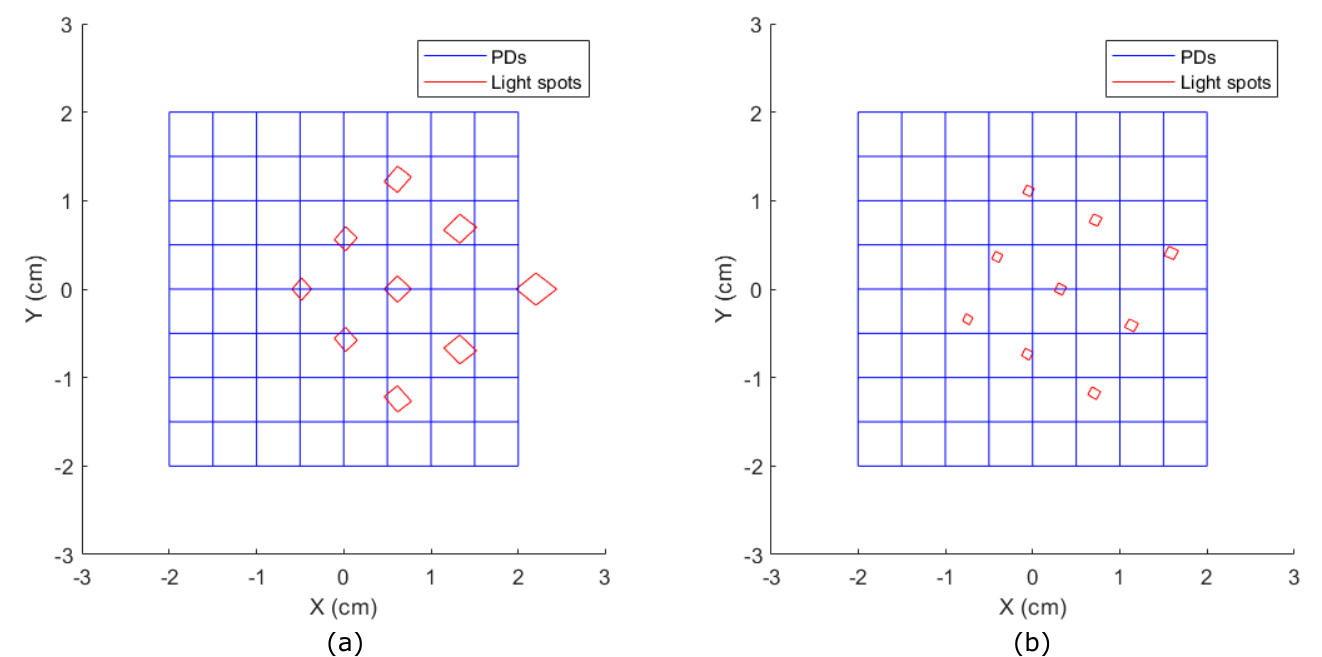"}
    \vspace{-3mm}
    \caption{Light spots obtained on the PD array using liquid convex lens at the receiver position $(2.5, 2.5, 0)$, and the receiver orientation, $\theta_R = 45^\circ$, and $\phi_R = 17^\circ$. (a) Without optimization (fixed lens parameters): $\theta_L = 5^\circ$, $\phi_L = {25}^\circ$, and $f = 3$ cm. (b) With optimization (using PBML scheme): $\theta_L = 18.2^\circ$, $\phi_L = 8.1^\circ$, and $f = 2.12$ cm.}
    \label{fig:result1}
        \vspace{-5mm}
\end{figure}

\textit{Effect of Lens Parameters: }Fig. \ref{fig:result1} shows the effect of the liquid lens parameter optimization on the light spot locations on the PD plane. The receiver is placed on the location $(2.5, 2.5, 0)$ in the room, and the receiver orientation angles are $\theta_R = 45^\circ$, and $\phi_R = 17^\circ$, respectively. Fig. \ref{fig:result1}(a) corresponds to the case where lens angles are fixed at $\theta_L = {5}^\circ$, and $\phi_L = {25}^\circ$ while the focal length is $f = 3$ cm. It can be noticed that one light spot is partially outside the area of the PD array, for the considered instance. Also, some PDs can be partially covered by more than one light spot, which results in high co-channel interference and weak BER performance of $8.2\times 10^{-2}$. On the other hand, Fig. \ref{fig:result1}(b) shows the case where lens parameters are optimized using the PBML scheme and are $\theta_L = 18.2^\circ$, $\phi_L = 8.1^\circ$, and $f = 2.12$ cm, respectively. In this configuration, all the light spots are captured by PDs, focused on sharp light beams, and low co-channel interference is observed. Such an optimization of lens parameters results in an improved BER of $0.72\times 10^{-3}$. Even though we show this performance gains in a certain instance of user mobility and random receiver orientation, a significant BER improvement can be observed in most of the positions and receiver orientation conditions. The performance gap can vary depending on the user position and the receiver orientation.

\textit{Position and Orientation Prediction Accuracy: }In Fig. \ref{fig:path}(a), we show an example of the user mobility patterns obtained using clothoid functions and predicted user positions obtained using the position and orientation prediction block of the ML architecture. Results are helpful to verify that our CNN-based position and orientation estimation and position prediction approach by using BiLSTM and LSTM are helpful to obtain the user positions and receiver orientation for future time instance with reasonable accuracy. In addition, \ref{fig:path}(b) shows the average MSE of the position and orientation prediction versus window size of the input used for position and orientation prediction, $N_I$. The average MSE of the position and orientation prediction is obtained as
\vspace{-2mm}
    \begin{align} 
    &MSE=\frac{1}{N_TN_P}\sum_{i_p=1}^{N_P}\sum_{k=1}^{N_T}\bigg\{(x_{i_p}^{k}-\hat x_{i_p}^{k})^2+(y_{i_p}^{i_c}-\hat y_{i_p}^{i_c})^2+\nonumber\\
    &(z_{i_p}^{k}-\hat z_{i_p}^{k})^2
    +(\theta_{R,i_p}^{k}-\hat \theta_{R,i_p}^{k})^2+(\phi_{R,i_p}^{k}-\hat \phi_{R,i_p}^{k})^2\bigg\},
        \vspace{-1mm}
    \end{align} 
where $(x_{i_p}^{k},y_{i_p}^{k},z_{i_p}^{k})$ and $(\hat x_{i_p}^{k},\hat y_{i_p}^{k}, \hat z_{i_p}^{k})$ are the actual and the predicted $k$-th position sample of the $i_p$-th user mobility path, respectively, $(\theta_{R,i_p}^{k}, \phi_{R, i_p}^{k})$ and $(\hat \theta_{R,i_p}^{k},\hat \phi_{R,i_p}^{k})$ are the actual and the predicted $k$-th receiver orientation pair sample of the $i_p$-th user mobility path, respectively, $N_T$ is the number of samples in each user mobility path, and $N_P$ is the number of paths considered. We illustrate the results for different numbers of PDs, $N_r$. Results show that the average MSE decreases as the $N_I$ increases, due to higher prediction accuracy with a larger window size. However, results show a saturating trend at high $N_I$, in particular when $N_I>10$, the improvement of MSE is small. On the other hand, a large $N_I$ value increases the computational complexity of the ML block. Hence, $N_I = 10$ can be selected as a practical value for implementation to obtain the required prediction accuracy. However, this value depends on the available hardware and accuracy requirements of the system. Moreover, an increase of $N_r$ results in better prediction as CNN can efficiently extract features from a larger input matrix, resulting in more accurate position and receiver orientation values\footnote{The works in~\cite{Dehghani_2023} and~\cite{Arin_2019} have presented imaging receivers with large PD arrays with small receiver area that can be easily adapted to our system.}. Specifically, an MSE reduction from $2\times 10^{-2}$ to $8\times 10^{-3}$ can be observed when $N_r$ increases from $16$ to $25$ at $N_I = 10$. However, the gap between curves reduces as $N_r$ increases. 

\begin{figure}[!t]
    \centering
    \includegraphics[width=0.99\columnwidth]{"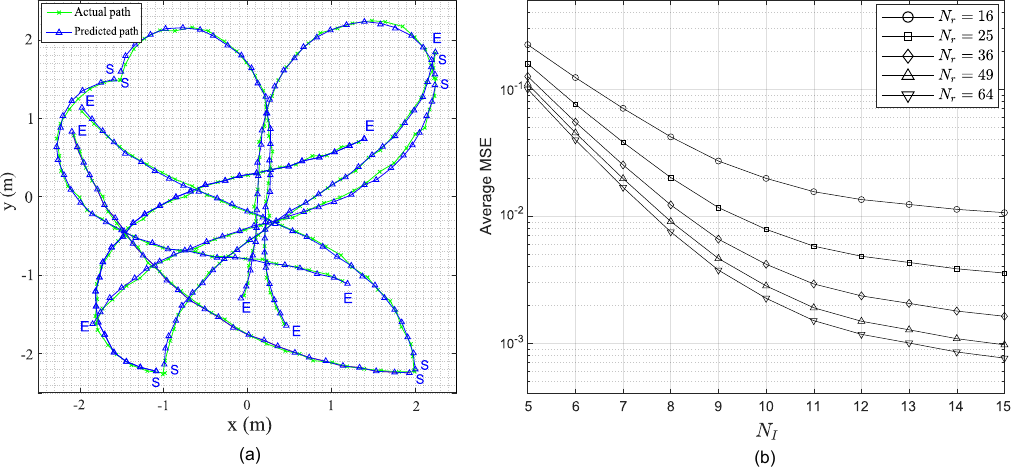"}
        \vspace{-3mm}
    \caption{(a) An example mobility pattern and path prediction in the indoor environment. S and E denote the starting and ending points, respectively. (b). Average MSE of the path prediction vs. $N_I$.}
    \label{fig:path}
        \vspace{-3mm}
\end{figure}

\begin{figure}[!t]
    \centering
    \includegraphics[width=0.92\columnwidth]{"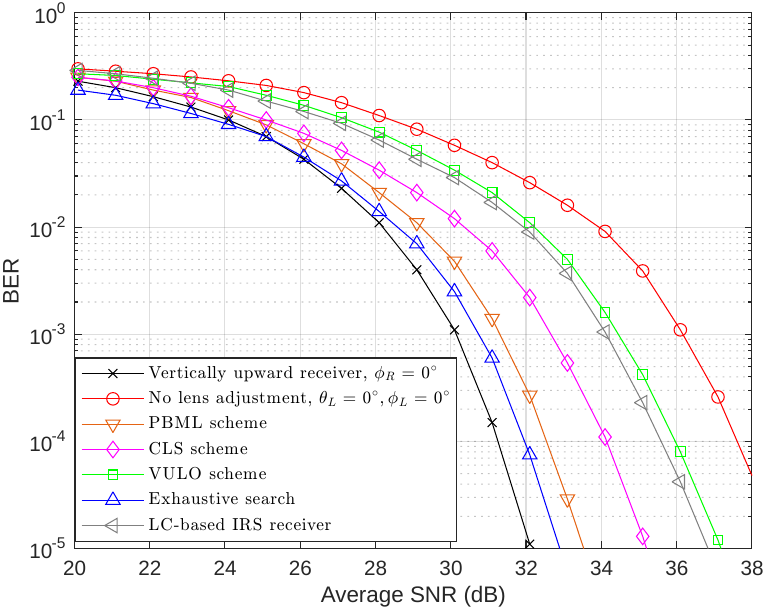"}
        \vspace{-2mm}
    \caption{\textcolor{black}{BER vs. average SNR for different optimization schemes.}}
    \label{fig:result2}
        \vspace{-5mm}
\end{figure}

\textit{Performance Comparison of Optimization Schemes: }
Fig. \ref{fig:result2} shows the BER performance of proposed two optimization schemes versus average SNR in dB obtained using \eqref{equ:e2}. The results are compared with the two benchmark schemes. In addition, to show the advances of using liquid lenses and the proposed optimization schemes, the results with no lens adjustment, denoted by $\theta_L = 0^{\circ}$, $\phi_L = 0^{\circ}$, and $f = 3$ cm, and with vertical upward receiver orientation, denoted by $\phi_R = 0^{\circ}$ are also presented.  Results show that when the receiver is vertically upward with no receiver orientation, the BER performance is much better compared to the case when there is random receiver orientation. The random receiver orientation results in poor BER performance when no lens adjustment is applied. Our two lens optimization schemes help to significantly improve the BER performance. Specifically, the BER can be improved from $6\times 10^{-2}$ to $1.4\times 10^{-3}$ at an average SNR of $30$ dB.This represents a significant gain under conditions of user mobility and random receiver orientation. Further improvements can be achieved with receivers that exhibit extremely low noise generation, such as those based on single-photon avalanche PDs. Among all schemes, exhaustive search achieves a better BER which is close to the case without any receiver orientation. The proposed PBML scheme shows near optimal results, with only a $0.9$ dB gap in BER at $30$ dB SNR value. In addition to that, CLS scheme is helpful to improve the BER as low complexity optimization schemes where the BER improvement is limited. Results indicate that VULO scheme has the lowest impact on the BER performance, however, the BER gap increases at a high SNR. Moreover, we compare our schemes with the liquid crystal (LC)-based IRS receiver presented in~\cite{Ngatched_2021}. To this end, we have selected the LC-based IRS receiver parameters as the weighted percentage of terthiophene (3T-2 MB) as $8\%$, the weighted percentage of trinitrofluorenone as $0.01\%$, the temperature as $30^{\circ}\text{C}$, LC cavity depth as $10 \mu\text{m}$, and external electric field as $1.2$ (V/$\mu$m)~\cite{Ngatched_2021}. Results illustrated in Fig. \ref{fig:result2} show that the BER performance is better than no lens adjustment. However, our proposed liquid convex lens-based receiver performs better than LC-based receivers, except with the VULO scheme.

\begin{figure}[!t]
    \centering
    \hspace{2.5mm}\includegraphics[width=0.89\columnwidth]{"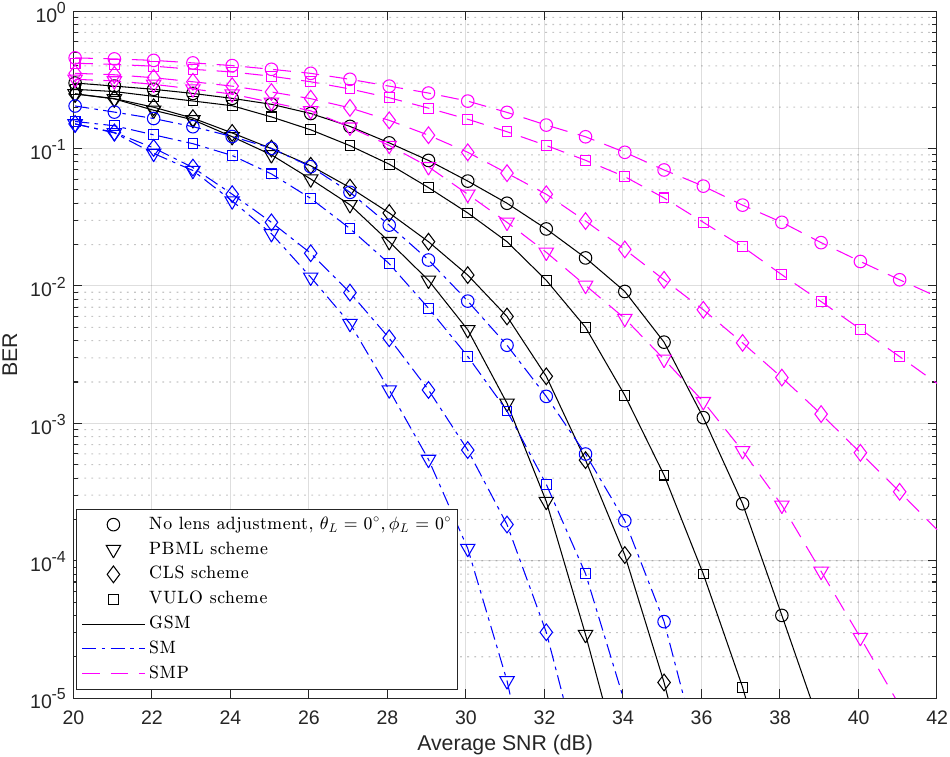"}
        \vspace{-2mm}
    \caption{\textcolor{black}{BER vs. average SNR for different MIMO schemes.}}
    \label{fig:result2_2}
        \vspace{-5mm}
\end{figure}

\textit{\textcolor{black}{Effect of different MIMO schemes:} }
\textcolor{black}{Fig. \ref{fig:result2_2} illustrates the BER performance versus SNR for various lens optimization schemes, PBML, CLS, VULO, and no lens optimization applied to different MIMO transmission techniques, namely GSM, SM, and SMP. The results demonstrate that PBML consistently achieves the best BER performance across all MIMO schemes. Among the transmission schemes, SM exhibits the lowest BER, albeit at the expense of reduced spectral efficiency. In contrast, SMP shows the lowest BER performance. Notably, the relative performance gain of the PBML scheme over the baseline (i.e., without lens optimization) is most pronounced in the SMP configuration. This improvement is attributed to PBML’s ability to reduce channel correlation and enhance the effective rank of the channel matrix, which is especially beneficial in the highly interference-limited SMP scenario.}

\begin{figure}[!t]
    \centering
    \includegraphics[width=0.92\columnwidth]{"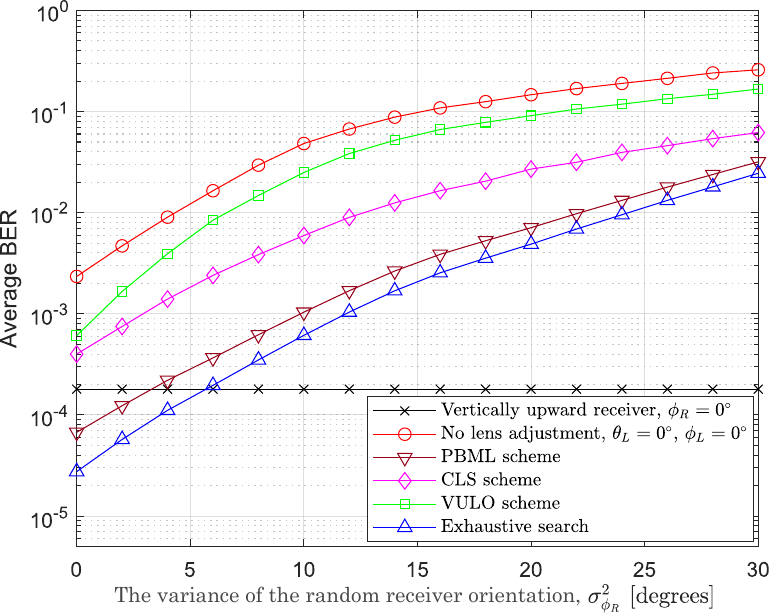"}
    \caption{BER vs. the variance of the random receiver orientation, $\sigma_{\phi_R}^2$, for different optimization schemes.}
    \label{fig:result3}
    \vspace{-5mm}
\end{figure}

\textit{Effect of the Random Receiver Orientation: }
To show the superiority of the reconfigurable liquid lens-assisted MIMO VLC system in different random receiver conditions, Fig. \ref{fig:result3} illustrates the BER performance versus the variance of the random receiver orientation, $\sigma_{\phi_R}^2$, for proposed optimization schemes, and benchmark schemes. A high $\sigma_{\phi_R}^2$ value results in a weak BER performance in all optimization schemes as the channel gain values become smaller, and hence, the norm in the BER expression becomes small. With no lens adjustment, the BER performance is weak compared to all other schemes and saturates fast around $\sigma_{\phi_R}^2 = 15^{\circ}$ as light spots are not focused properly on each PD. The results are helpful to identify that the use of liquid lens and lens parameters optimization results in better performance in a wide range of $\sigma_{\phi_R}^2$ and the use of PBML scheme is helpful to obtain near optimal performance. In particular, a BER improvement of more than $9.5$ dB can be observed for $\sigma_{\phi_R}^2 \le 30^{\circ}$. In addition, for $\sigma_{\phi_R}^2 \le 5^{\circ}$, even a better performance than a static receiver is observed with liquid lens and optimization using PBML scheme or exhaustive search. As low complexity schemes, CLS scheme and VULO are also helpful to improve BER in a wide range of $\sigma_{\phi_R}^2$. In particular, exhaustive search and CLS scheme have similar performance gains for low $\sigma_{\phi_R}^2$ values. This is expected due to the selection of almost similar lens parameter values from exhaustive search and CLS scheme when $\phi_R$ is closer to zero.

\begin{figure}[!t]
    \centering
    \includegraphics[width=0.92\columnwidth]{"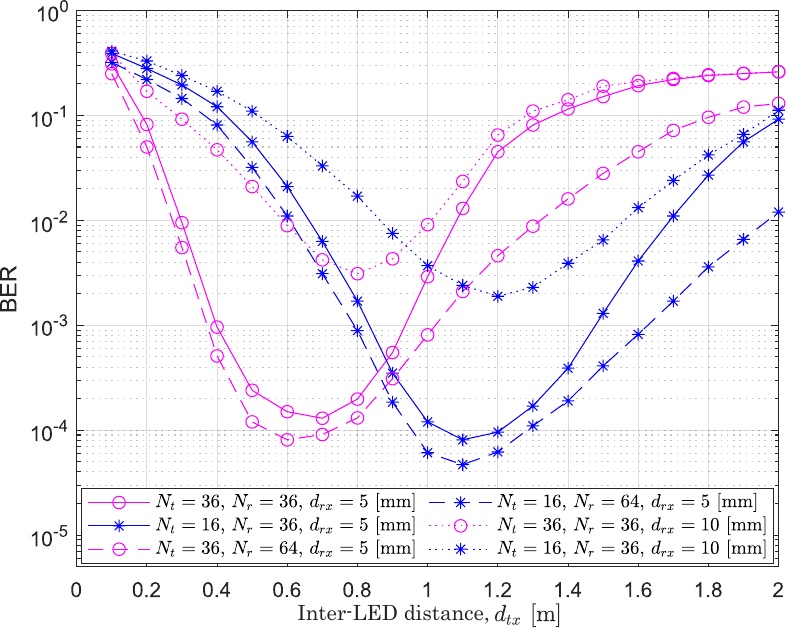"}
    \caption{BER vs. inter-LED distance, $d_{tx}$, obtained using PBML scheme. The average SNR is $31$ dB, $\sigma^2_{\phi_R} = 10^{\circ}$, and $N_a = 1$.}
    \label{fig:result4}
    \vspace{-5mm}
\end{figure}

\textit{Effect of the Inter-LED Distance: }
In Fig. \ref{fig:result4}, we show the BER performance versus inter-LED distance, $d_{tx}$. For this simulation, we used $\sigma^2_{\phi_R} = 10^{\circ}$, $N_a = 1$, and the average SNR was $31$ dB. Results are illustrated for different number of LED transmitters, $N_t$, different number of PD receivers, $N_r$, and different inter-PD distance, $d_{rx}$ values. Our results indicate that at low $d_{tx}$, the BER performance is poor due to the overlap of light spot locations on the PD plane, which leads to high correlation among the channels. Similarly, high $d_{tx}$ also results in poor BER, particularly at high $N_t$, as average channel gains are reduced, and channel gains from edge LEDs become exceedingly poor. Therefore, for a given $N_t$, there exists an optimal $d_{tx}$ value that minimizes the BER. Specifically, the optimal $d_{tx}$ value are $0.7$ m and $1.1$ m for the cases with $N_t = 36$ and $N_t = 16$, respectively. The optimal $d_{tx}$ tends to decrease as $N_t$ increases. This behavior is attributed not only to the spatial arrangement of LEDs and PDs, but also due to the adaptive liquid lens's ability to focus light from a large area of the ceiling onto the PD plane. Consequently, our results provide valuable insights for selecting practical values in real-world implementation\footnote{Optimizing the placement of LEDs presents an intriguing direction for future research.}. It is noted that a higher $N_r$ value results in better BER performance, especially at high $d_{tx}$. This is expected due to two factors such as ($i$) more PDs can capture light coming from LEDs at the corner of the room, and ($ii$) the position and orientation prediction accuracy of the ML architecture increases as more features can be extracted using the CNN. In addition, high $d_{rx}$ results in poor performance as a sparsely placed PD array cannot capture formed light spots from the liquid lens efficiently. This result is helpful to identify that a closely packed large number of PDs in the PD plane can be helpful in achieving maximum performance of the liquid lens-assisted GSM-based MIMO VLC systems. 

\vspace{-3mm}
\section{Conclusion}\label{sec:conclusion}
In this paper, we proposed an adjustable liquid lens-assisted imaging receiver for the MIMO VLC system. To achieve efficient communication, the GSM scheme is adopted in this setup. Specifically, practical conditions such as user mobility and random receiver orientation were taken into account. A channel gain model applicable for liquid lens-assisted MIMO VLC is presented that can be applied in future system designs. We optimized the focal length and orientation angles of the liquid lens in order to minimize the BER performance. Driven by the fact that channel model of the liquid lens-assisted MIMO VLC system is mathematically complex, two optimization schemes including ($i$) PBML scheme (ML-based), and ($ii$) CLS scheme (low-complexity scheme) were presented and compared with two benchmark schemes and conventional static imaging receivers. In particular, the PBML scheme can estimate the user position and receiver orientation, predict them for future time instance, and accordingly optimize lens parameters. Our results show that the proposed liquid lens-based imaging receiver and the proposed ML-based optimization scheme are helpful in achieving superior BER performance. Specifically, the BER can be improved from $6\times 10^{-2}$ to $1.4\times 10^{-3}$ at an average SNR of $30$ dB using the PBML scheme. Further, our results facilitate to identify the optimal system parameters for the proposed model and performance gains of the optimization schemes under a wide range of receiver orientation conditions.

\vspace{-3mm}
\appendices
\section{Proof of Lemma \ref{Lemma1}}\label{Appendix1}
The receiver's coordinate frame, $O'X'Y'Z'$ can be considered as a rotated version of the room's coordinate frame, $OXYZ$ by $\theta_R$ around its $Z$-axis and $\phi_R$ by its $Y$-axis. The rotation matrix between the frames $OXYZ$ and $O'X'Y'Z'$ can be expressed as  $^0\mathbf{R}_1(\theta_R,\phi_R)=\mathbf{R}_Y(\phi_R)\mathbf{R}_Z(\theta_R)$, where $\mathbf{R}_Y(\phi_R)$ is the rotation matrix around $Y$-axis by $\phi_R$ angle and $\mathbf{R}_Z(\theta_R)$ is the rotation matrix around $Z$-axis by $\theta_R$ angle and it reduces to
\begin{equation}\label{equ:rot01}
    ^0\mathbf{R}_1(\theta_R,\phi_R)
    = 
    \begin{bmatrix}
     \text{c}{\theta_R}\text{c}{\phi_R} & \text{s}{\theta_R}\text{c}{\phi_R} & -\text{s}{\phi_R}\\
     -\text{s}{\theta_R} & \text{c}{\theta_R}& 0\\
    \text{c}{\theta_R}\text{s}(\phi_R) & \text{s}{\theta_R}\text{s}{\phi_R} & \text{c}{\phi_R}\\
    \end{bmatrix},
\end{equation}
where $\text{c}\theta = \cos{\theta}$, and $\text{s}\theta = \sin{\theta}$. The unit normal vector to the PD plane lies along the axis, $Z'$, and hence, by using the passive rotation between two coordinate frames $OXYZ$ and $O'X'Y'Z'$, the final expression can be derived by evaluating the expression $\hat{\boldsymbol{\eta}}_{R}={^0\mathbf{R}_1^{-1}(\theta_R,\phi_R)[0\quad 0 \quad 1]^T}$.

\vspace{-3mm}
\section{Proof of Lemma \ref{Lemma2}}\label{Appendix2}
The lens has two degrees of freedom, having the ability to rotate around its $Z'$-axis to change its yaw angle $\theta_L$ by rotating its magnetic ring. It can be tilted around its $Y'$-axis to create a polar angle $\phi_L$ using its tilting mechanism. Taking these into account, the rotation matrix from the receiver coordinate frame to the lens coordinate frame can be expressed as $^1\mathbf{R}_2(\theta_R,\phi_R)=\mathbf{R}_{Y'}(\phi_L)\mathbf{R}_{Z'}(\theta_L)$, where $\mathbf{R}_{Y'}(\phi_L)$ is the rotation matrix around $Y'$-axis by $\phi_L$ angle and $\mathbf{R}_{Z'}(\theta_L)$ is the rotation matrix around $Z'$-axis by $\theta_L$ angle. The unit normal vector of the lens can be thought of as a unit vector along its own $Z''$-axis, and it can be represented in the room's coordinate frame as $\hat{\boldsymbol{\eta}}_{len}=^0\mathbf{R}_1^{-1}(\theta_R,\phi_R)^1\mathbf{R}_2^{-1}(\theta_L,\phi_L)[0\quad 0 \quad 1]^T$.

\vspace{-3mm}
\section{Proof of Lemma \ref{Lemma3}}\label{Appendix3}
Let $A_i$, $B_i$, $C_i$, and $D_i$ represent the vertices of the $i$-th LED transmitter,
with corresponding image points on the PD plane as $A'_i$, $B'_i$, $C'_i$, and $D'_i$ (See Fig. \ref{fig:f1}(b)). The points $A_i$, $B_i$, $C_i$, and $D_i$ can be expressed as $\boldsymbol{P}_{A_i} = (x_i^L-d_{ts}/2, y_i^L-d_{ts}/2, z_i^L)$, $\boldsymbol{P}_{B_i} = (x_i^L-d_{ts}/2, y_i^L+d_{ts}/2, z_i^L)$, $\boldsymbol{P}_{C_i} = (x_i^L+d_{ts}/2, y_i^L+d_{ts}/2, z_i^L)$, and $\boldsymbol{P}_{D_i} = (x_i^L+d_{ts}/2, y_i^L-d_{ts}/2, z_i^L)$. The unit vector along a direction from the midpoint of the lens to the point $q_i\in\{A_i,B_i,C_i,D_i\}$ is given by
$\hat{\boldsymbol{\eta}}_{q_i,l}=\frac{\hat{\boldsymbol{P}}_{q_i}-\hat{\boldsymbol{P}}_{len}}{||\hat{\boldsymbol{P}}_{q_i}-\hat{\boldsymbol{P}}_{len}||}$.
Similar to the case of $\phi_i$, the incident angle on the lens from the $q_i$-th point can be expressed as $\phi_{q_i} = \cos^{-1}\left(\hat{\boldsymbol{\eta}}_{q_i,l}\cdot\hat{\boldsymbol{\eta}}_{len}\right)$. Moreover, the projection of the line connecting the mid-point of the lens to the point $q_i$ along the axis $\hat{\boldsymbol{\eta}}_{len}$ is $d_{q_i, \hat{\boldsymbol{\eta}}_{len}} = ||(\hat{\boldsymbol{P}}_{q_i}-\hat{\boldsymbol{P}}_{len})|| \cos\left(\phi_{q_i}\right)$.

Assuming the use of the paraxial optical lens, the image formed by the lens can be geometrically distortionless in comparison to the source. Let's assume the paraxial distance to the object from the lens is $u_{obj} = d_{q_i, \hat{\boldsymbol{\eta}}_{len}}$, and the paraxial distance to the image plane is $u_{img}$. By using the basic lens maker's law given by $\frac{1}{f} = \frac{1}{u_{img}}-\frac{1}{u_{obj}}$, it can be rearranged as $\frac{u_{obj}}{u_{img}} = \frac{u_{obj}}{f}+1$. This expression can be used to calculate the the magnification produced by the lens which can be expressed as $m = \frac{u_{img}}{u_{obj}} = \frac{f}{f+d_{q_i, \hat{\boldsymbol{\eta}}_{len}}}$. In our setup, it is not possible to assume that the spot formation is on the image plane since the focal length of the lens can be adjusted. However, since the object's distance is much larger than the focal length, it can be assumed that the object is at infinity, and hence the image plane is closer to the focal plane. With these assumptions, the cross point of the axis of the lens with the image plane is expressed as
    \begin{equation}\label{equ:pip}
        \hat{\boldsymbol{P}}_{IP} = \hat{\boldsymbol{P}}_{len} - m \left(\hat{\boldsymbol{P}}_{T,q_i}-\hat{\boldsymbol{P}}_{len}\right),
    \end{equation}
where $\hat{\boldsymbol{P}}_{T,q_i} = \hat{\boldsymbol{P}}_{len}+d_{q_i, \hat{\boldsymbol{\eta}}_{len}}\hat{\boldsymbol{\eta}}_{len}$ is the position vector of the point on the axis of the lens project by the point $q_i$. The image plane is perpendicular to the axis of the lens, and hence, the position of the focal point of the light on the image plane is expressed as 
    \begin{equation}\label{equ:img}
        \left(\hat{\boldsymbol{P}}_{img,q_i} - \hat{\boldsymbol{P}}_{IP}\right)\cdot\hat{\boldsymbol{\eta}}_{len} = 0.
    \end{equation}
Considering the small area of the lens and approximating to a single reflection from the lens surface, the direction of the refracted light beam can be expressed as~\cite{Amit_2012}
    \begin{align}
        \hat{\boldsymbol{\eta}}_{ref,q_i} &= \frac{1}{n_{l}}\left[\hat{\boldsymbol{\eta}}_{len}\times (\hat{\boldsymbol{\eta}}_{len}\times \hat{\boldsymbol{\eta}}_{q_i,l}) \right] \nonumber \\
        &-\hat{\boldsymbol{\eta}}_{len}\sqrt{1-\frac{1}{n_{l}^2}(\hat{\boldsymbol{\eta}}_{len}\times\hat{\boldsymbol{\eta}}_{q_i,l})\cdot(\hat{\boldsymbol{\eta}}_{len}\times\hat{\boldsymbol{\eta}}_{q_i,l})},
    \end{align}
where $n_1$ is the relative refractive index of the liquid. We assume that the PD plane is close to the image plane and a single crossing point of the light with the PD plane is considered. Hence, $\hat{\boldsymbol{P}}_{img,q_i}$ can be further expressed as    
    \begin{equation}\label{equ:img2}
        \hat{\boldsymbol{P}}_{img,q_i} = \lambda_{ref}\hat{\boldsymbol{\eta}}_{ref,q_i}
 +\hat{\boldsymbol{\eta}}_{len},
    \end{equation}
where $\lambda_{ref}$ is a constant. Substituting \eqref{equ:img2} and \eqref{equ:pip} into \eqref{equ:img}, $\lambda_{ref}$ can be obtained, and hence, $\hat{\boldsymbol{P}}_{img,q_i}$ can be resolved. The locations of the light spots on the PD plane can be obtained as follows. 
The PD plane is perpendicular to the axis of the receiver, and hence, the position of the light spot on the PD plane is expressed as 
    \begin{equation}\label{equ:pdplane}
        \left(\hat{\boldsymbol{P}}_{spot,q_i} - \hat{\boldsymbol{P}}_{R}\right)\cdot\hat{\boldsymbol{\eta}}_{Rec} = 0.
    \end{equation}
Also, $\hat{\boldsymbol{P}}_{spot,q_i}$ lies on the reflected ray and is expressed as
    \begin{equation}\label{equ:img3}
        \hat{\boldsymbol{P}}_{spot,q_i} = \lambda_{spot}\hat{\boldsymbol{\eta}}_{ref,q_i}
 +\hat{\boldsymbol{\eta}}_{len},
    \end{equation}
where $\lambda_{spot}$ is a constant. Substituting \eqref{equ:img3} into \eqref{equ:img}, $\lambda_{spot}$ can be obtained, and hence, $\hat{\boldsymbol{P}}_{spot,q_i}$ can be resolved.

\section{Proof of Lemma \ref{Lemma4}}\label{Appendix4}
Let $d_{i^*}= \sqrt{(x_{i^*}-x_{len})^2+(y_{i^*}-y_{len})^2+(z_{i^*}-z_{len})^2}$ denotes the distance to the closest LED from the lens, where $(x_{i^*}, y_{i^*},z_{i^*})$ as the coordinate of the selected $i^*$-th LED. Equating coefficients along $x$, $y$, and $z$ directions in $\hat{\boldsymbol{\eta}}_{i^*,l}$ to $\hat{\boldsymbol{\eta}}_{len}$ the following equations hold,
\begin{align}\label{eqs:36}
\text{c} {\theta_R}\text{c}{\phi_R}\text{c}{\theta_L}\text{s}{\phi_L}-\text{s}{\theta_R}\text{s}{\theta_L}\text{s}{\phi_L}+
\text{c}{\theta_R}\text{s}{\phi_R}\text{c}{\phi_L} = \frac{x_{i^*}-x_{len}}{d_{i^*}}, 
\end{align}
    \vspace{-5mm}
\begin{align}\label{eqs:37}
\text{s}{\theta_R}\text{c}{\phi_R}\text{c}{\theta_L}\text{s}{\phi_L}+\text{c}{\theta_R}\text{s}{\theta_L}\text{s}{\phi_L}+
\text{s}{\theta_R}\text{s}{\phi_R}\text{c}{\phi_L} = \frac{y_{i^*}-y_{len}}{d_{i^*}},
\end{align}
    \vspace{-5mm}
\begin{align}\label{eqs:38}
-\text{s}{\phi_R}\text{c}{\theta_L}\text{s}{\phi_L}+\text{c}{\phi_R}\text{c}{\phi_L} = \frac{z_{i^*}-z_{len}}{d_{i^*}}.
\end{align}
The expressions in \eqref{eqs:36}, \eqref{eqs:37}, and \eqref{eqs:38} can be combined to obtain closed-form expressions for $\theta_L$ and $\phi_L$ for this scheme. For this purpose, \eqref{eqs:36} is multiplied with $\text{s} \theta_R$ and \eqref{eqs:37} is multiplied with $\text{c} \theta_R$. The resulting expressions can be subtracted to get a single expression. The resulting expression and \eqref{eqs:38} are combined, and after some mathematical manipulations, a quadratic equation of the variable $\text{c}{\phi_L}$ can be obtained and expressed as
\begin{align}\label{eqs:quad1}
&\text{c}{\phi_L}^2-2\text{c}{\phi_R}\left(\frac{z_{i^*}-z_{len}}{d_{i^*}}\right)\text{c}{\phi_L} + \left(\frac{z_{i^*}-z_{len}}{d_{i^*}}\right)^2 
+ \text{s}{\phi_R}^2\times\nonumber \\
&\left(\text{c}{\theta_R} \hspace{-0.5ex} \left(\frac{y_{i^*}-y_{len}}{d_{i^*}}\right)\hspace{-0.5ex}-\hspace{-0.5ex} \text{s}{\theta_R}\left(\frac{x_{i^*}-x_{len}}{d_{i^*}}\right)\right)^2
\hspace{-0.5ex}- \hspace{-0.5ex}\text{s}{\phi_R}^2\hspace{-0.5ex}= \hspace{-0.5ex}0. 
\end{align}
After finding the solutions for the quadratic equation in \eqref{eqs:quad1}, $\text{c}{\phi_L}$ is expressed as 
\begin{align}\label{eqs:quad1_sol}
&\text{c}{\phi_L} = \text{c}{\theta_R}\left(\frac{z_{i^*}-z_{len}}{d_{i^*}}\right)\pm\text{s}{\theta_R}\Bigg(1-\left(\frac{z_{i^*}-z_{len}}{d_{i^*}}\right)^2\nonumber \\
&-\left(\text{c}{\theta_R} \left(\frac{y_{i^*}-y_{len}}{d_{i^*}}\right)- \text{s}{\theta_R}\left(\frac{x_{i^*}-x_{len}}{d_{i^*}}\right)\right)^2\Bigg)^{\frac{1}{2}}.
\end{align}
However, from the geometrical observations it is noted that the square-root component of \eqref{eqs:quad1_sol} is zero and $\phi_L$ has a distinct solution, which is the final expression for $\phi_L$ in Lemma \ref{Lemma3}. Finally, by using \eqref{eqs:36}, \eqref{eqs:37}, and \eqref{eqs:quad1_sol_sim}, with the trigonometric relation $\sin{\theta} = \sqrt{1-\cos^2{\theta}}$, the final expression for $\theta_L$ can be derived.

\let\mybibitem\bibitem
\renewcommand{\bibitem}[1]{%
  \ifstrequal{#1}{}
    {\color{blue}\mybibitem{#1}}
    {\ifstrequal{#1}{}{\color{blue}\mybibitem{#1}}{\ifstrequal{#1}{}{\color{blue}\mybibitem{#1}}{\color{black}\mybibitem{#1}}}}
}

\vspace{-1mm}
\bibliographystyle{IEEEtran}
\bibliography{IEEEabrv,Main}

\begin{thebibliography}{10}
\providecommand{\url}[1]{#1}
\csname url@samestyle\endcsname
\providecommand{\newblock}{\relax}
\providecommand{\bibinfo}[2]{#2}
\providecommand{\BIBentrySTDinterwordspacing}{\spaceskip=0pt\relax}
\providecommand{\BIBentryALTinterwordstretchfactor}{4}
\providecommand{\BIBentryALTinterwordspacing}{\spaceskip=\fontdimen2\font plus
\BIBentryALTinterwordstretchfactor\fontdimen3\font minus \fontdimen4\font\relax}
\providecommand{\BIBforeignlanguage}[2]{{%
\expandafter\ifx\csname l@#1\endcsname\relax
\typeout{** WARNING: IEEEtran.bst: No hyphenation pattern has been}%
\typeout{** loaded for the language `#1'. Using the pattern for}%
\typeout{** the default language instead.}%
\else
\language=\csname l@#1\endcsname
\fi
#2}}
\providecommand{\BIBdecl}{\relax}
\BIBdecl

\bibitem{Ghassemlooy}
Z.~Ghassemlooy, S.~Arnon, M.~Uysal, Z.~Xu, and J.~Cheng, ``Emerging optical wireless communications-{Advances} and challenges,'' \emph{IEEE J. Sel. Areas Commun.}, vol.~33, pp. 1738--1749, Sep. 2015.

\bibitem{Renzo_2014}
M.~Di~Renzo, H.~Haas, A.~Ghrayeb, S.~Sugiura, and L.~Hanzo, ``Spatial modulation for generalized {MIMO}: Challenges, opportunities, and implementation,'' \emph{Proc. {IEEE}}, vol. 102, pp. 56--103, Jan. 2014.

\bibitem{chockalingam_2015}
S.~P. Alaka, T.~L. Narasimhan, and A.~Chockalingam, ``Generalized spatial modulation in indoor wireless visible light communication,'' in \emph{Proc. IEEE Global Commun. Conf. (GLOBECOM 2015)}, San Diego, CA, Dec. 2015, pp. 1--7.

\bibitem{Asanka_2015}
A.~Nuwanpriya, S.-W. Ho, and C.~S. Chen, ``Indoor {MIMO} visible light communications: Novel angle diversity receivers for mobile users,'' \emph{{IEEE} J. Sel. Areas Commun.}, vol.~33, pp. 1780--1792, Sept. 2015.

\bibitem{Fath_2023}
T.~Fath and H.~Haas, ``Performance comparison of {MIMO} techniques for optical wireless communications in indoor environments,'' \emph{{IEEE} Trans. Commun.}, vol.~61, pp. 733--742, Feb. 2013.

\bibitem{Nahhal_2021}
M.~Al-Nahhal, E.~Basar, and M.~Uysal, ``Flexible generalized spatial modulation for visible light communications,'' \emph{{IEEE} Trans. Veh. Technol.}, vol.~70, pp. 1041--1045, Jan. 2021.

\bibitem{Wang_2018}
T.~Wang, F.~Yang, L.~Cheng, and J.~Song, ``Spectral-efficient generalized spatial modulation based hybrid dimming scheme with {LACO-OFDM} in {VLC},'' \emph{{IEEE} Access}, vol.~6, pp. 41\,153--41\,162, June 2018.

\bibitem{Chen_2021}
Y.~Chen, S.~Gao, G.~Tu, and D.~Chen, ``Group-based {LED} selection for generalized spatial modulation in visible light communication,'' \emph{{IEEE} Commun. Lett.}, vol.~25, pp. 3022--3026, Sept. 2021.

\bibitem{Shiyuan_2023}
S.~Sun, F.~Yang, J.~Song, and R.~Zhang, ``Intelligent reflecting surface for {MIMO} {VLC}: Joint design of surface configuration and transceiver signal processing,'' \emph{{IEEE} Trans. Wirel. Commun.}, vol.~22, pp. 5785--5799, Sept. 2023.

\bibitem{Shiyuan_2024}
S.~Sun \emph{et~al.}, ``Optical intelligent reflecting surface assisted {MIMO} {VLC}: Channel modeling and capacity characterization,'' \emph{{IEEE} Trans. Wirel. Commun.}, vol.~23, pp. 2125--2139, Mar. 2024.

\bibitem{Sushanth_2018}
K.~V. S.~S. Sushanth and A.~Chockalingam, ``Performance of imaging receivers using convex lens in indoor {MIMO} {VLC} systems,'' in \emph{Proc. {IEEE} Veh. Technol. Conf. (VTC-Spring 2018)}, Porto, Portugal, June 2018, pp. 1--5.

\bibitem{Wang_2013}
T.~Q. Wang, Y.~A. Sekercioglu, and J.~Armstrong, ``Analysis of an optical wireless receiver using a hemispherical lens with application in {MIMO} visible light communications,'' \emph{J. Light. Technol.}, vol.~31, pp. 1744--1754, June 2013.

\bibitem{Jiang_2018}
R.~Jiang, Z.~Wang, X.~Zhu, and Q.~Wang, ``Interference-free {LED} allocation for visible light communications with fisheye lens,'' \emph{J. Light. Technol.}, vol.~36, pp. 626--636, Feb. 2018.

\bibitem{ABoagye_2023}
S.~Aboagye, T.~M.~N. Ngatched, A.~R. Ndjiongue, O.~A. Dobre, and H.~Shin, ``Liquid crystal-based {RIS} for {VLC} transmitters: Performance analysis, challenges, and opportunities,'' \emph{{IEEE} Wirel. Commun.}, vol.~31, pp. 1--8, Nov. 2023.

\bibitem{Ndjiongue_2021}
A.~R. Ndjiongue, T.~M.~N. Ngatched, O.~A. Dobre, and H.~Haas, ``Toward the use of re-configurable intelligent surfaces in {VLC} systems: Beam steering,'' \emph{{IEEE} Wirel. Commun.}, vol.~28, pp. 156--162, June 2021.

\bibitem{Ngatched_2021}
------, ``Re-configurable intelligent surface-based {VLC} receivers using tunable liquid-crystals: The concept,'' \emph{J. Lightwave Technol.}, vol.~39, pp. 3193--3200, May 2021.

\bibitem{Cheng_2021}
Y.~Cheng, J.~Cao, and Q.~Hao, ``Optical beam steering using liquid-based devices,'' \emph{Opt. Lasers Eng.}, vol. 146, p. 106700, Nov. 2021.

\bibitem{Lee_2019}
J.~Lee, J.~Lee, and Y.~H. Won, ``Nonmechanical three-dimensional beam steering using electrowetting-based liquid lens and liquid prism,'' \emph{Opt. Express}, vol.~27, pp. 36\,757--36\,766, Dec. 2019.

\bibitem{Zohrabi_2016}
M.~Zohrabi, R.~H. Cormack, and J.~T. Gopinath, ``Wide-angle nonmechanical beam steering using liquid lenses,'' \emph{Opt. Express}, vol.~24, pp. 23\,798--23\,809, Oct. 2016.

\bibitem{Tian_2022}
J.-Q. Tian, Z.-Z. Zhao, and L.~Li, ``Adaptive liquid lens with a tunable field of view,'' \emph{Opt. Express}, vol.~30, pp. 40\,991--41\,001, Oct. 2022.

\bibitem{Zhang_2025}
M.~Zhang, J.~Lv, Z.~Gan, Z.~Hu, and H.~Hong, ``A review on tunable liquid prism: Methods, applications and challenges,'' \emph{Opt. Lasers Eng.}, vol. 185, p. 108723, Feb. 2025.

\bibitem{Lv_2025}
J.~Lv \emph{et~al.}, ``Bio-inspired optofluidic focus-tunable imaging system with large two-dimensional field-of-view tuning capability,'' \emph{Meas.}, vol. 253, p. 117332, Sept. 2025.

\bibitem{Arfaoui_2021}
M.~A. Arfaoui \emph{et~al.}, ``Invoking deep learning for joint estimation of indoor {LiFi} user position and orientation,'' \emph{{IEEE} J. Sel. Areas Commun.}, vol.~39, pp. 2890--2905, Sept. 2021.

\bibitem{Kapila_2024}
K.~W.~S. Palitharathna, N.~D. Wickramasinghe, A.~M. Vegni, and H.~A. Suraweera, ``Neural network-based optimization for {SLIPT}-enabled indoor {VLC} systems with energy constraints,'' \emph{IEEE Trans. Green Commun. Netw.}, vol.~8, pp. 839--851, June 2024.

\bibitem{Kapila_2023}
K.~W.~S. Palitharathna, H.~A. Suraweera, R.~I. Godaliyadda, V.~R. Herath, and Z.~Ding, ``Neural-network-based blockage prediction and optimization in lightwave power-transfer-enabled hybrid {VLC/RF} systems,'' \emph{IEEE Internet Things J.}, vol.~11, pp. 5237--5248, Feb. 2024.

\bibitem{soltani_2019}
M.~D. Soltani, A.~A. Purwita, Z.~Zeng, H.~Haas, and M.~Safari, ``Modeling the random orientation of mobile devices: Measurement, analysis and {LiFi} use case,'' \emph{{IEEE} Trans. Commun.}, vol.~67, pp. 2157--2172, Mar. 2019.

\bibitem{Bart_1986}
B.~Braden, ``The surveyor's area formula,'' \emph{College Math. J.}, vol.~17, pp. 326--337, Sept. 1986.

\bibitem{Matthiesen_2020}
B.~Matthiesen, C.~Hellings, E.~A. Jorswieck, and W.~Utschick, ``Mixed monotonic programming for fast global optimization,'' \emph{{IEEE} Trans. Signal Process.}, vol.~68, pp. 2529--2544, Mar. 2020.

\bibitem{Rekkas_2021}
V.~P. Rekkas \emph{et~al.}, ``Machine learning in beyond {5G/6G} networks—state-of-the-art and future trends,'' \emph{Electron.}, vol.~10, p. 2786, Nov. 2021.

\bibitem{Zhong_2018}
Z.~Zhong, J.~Yan, W.~Wu, J.~Shao, and C.-L. Liu, ``Practical block-wise neural network architecture generation,'' in \emph{Proc. {IEEE} Comput. Soc. Conf. Comput. Vis. Pattern Recognit. (CVPR 2018)}, Salt Lake City, USA, June 2018, pp. 2423--2432.

\bibitem{Zewen_2022}
Z.~Li, F.~Liu, W.~Yang, S.~Peng, and J.~Zhou, ``A survey of convolutional neural networks: Analysis, applications, and prospects,'' \emph{{IEEE} Trans. Neural Netw. Learn. Syst.}, vol.~33, pp. 6999--7019, Dec. 2022.

\bibitem{Sepp_1997}
S.~Hochreiter and J.~Schmidhuber, ``Long short-term memory,'' \emph{Neural Comput.}, vol.~9, pp. 1735--1780, Nov. 1997.

\bibitem{Freire_2024}
P.~Freire \emph{et~al.}, ``Computational complexity optimization of neural network-based equalizers in digital signal processing: A comprehensive approach,'' \emph{J. Lightwave Technol.}, vol.~42, pp. 4177--4201, June 2024.

\bibitem{Dehghani_2023}
M.~D. Soltani \emph{et~al.}, ``High-speed imaging receiver design for {6G} optical wireless communications: A rate-fov trade-off,'' \emph{{IEEE} Trans. Commun.}, vol.~71, pp. 1024--1043, Feb. 2023.

\bibitem{Arin_2019}
A.~C. Ulku \emph{et~al.}, ``A 512 × 512 {SPAD} image sensor with integrated gating for widefield {FLIM},'' \emph{{IEEE} J. Sel. Top. Quantum Electron.}, vol.~25, p. 6801212, Feb. 2019.

\bibitem{Amit_2012}
A.~Agrawal, S.~Ramalingam, Y.~Taguchi, and V.~Chari, ``A theory of multi-layer flat refractive geometry,'' in \emph{Proc. {IEEE} Comput. Soc. Conf. Comput. Vis. Pattern Recognit. (CVPR 2012)}, Providence, USA, 2012, pp. 3346--3353.

\end{thebibliography}

\begin{IEEEbiography}[{\includegraphics[width=1in,keepaspectratio]{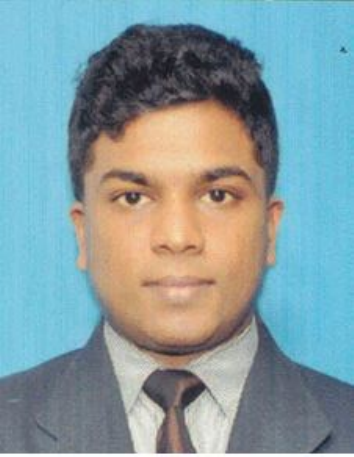}}]{Kapila W. S. Palitharathna}
(Member, IEEE) received the B.Sc. Eng. (Hons.) and Ph.D. degrees in Electrical and Electronic Engineering from the University of Peradeniya, Sri Lanka, in 2016, and in 2022. He is currently a Postdoctoral Research Associate at the IRIDA Research Centre for Communication Technologies, Department of Electrical and Computer Engineering, University of Cyprus. He has worked as a Research Assistant, a Senior Lecturer, and the Head of the Department (Telecommunication) at the Sri Lanka Technological Campus, Sri Lanka, from 2018 to 2023. He also worked as an Instructor at the University of Peradeniya, Sri Lanka, from 2016 to 2017. His research interests include visible light communication, underwater optical wireless communication, non-orthogonal multiple access, intelligent reflective surfaces, energy harvesting communications, and machine learning for communication. 
\end{IEEEbiography}

\begin{IEEEbiography}[{\includegraphics[width=1in,keepaspectratio]{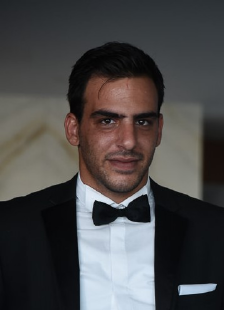}}]{Christodoulos Skouroumounis} (S’15–M’20) received the Diploma in Computer Engineering from the Electrical and Computer Engineering Department of the National Technical University of Athens (NTUA), Greece, in 2014, and a Ph.D. in Computer Engineering from the University of Cyprus, Cyprus, in 2019. He is currently a Senior Wireless R\&D Engineer at CyRIC – Cyprus Research and Innovation Center Ltd. His main research interests include the modeling, design, and performance analysis of future wireless communication systems, using tools such as multi-antenna network information theory and stochastic geometry. Specific topics include cooperative networks, full-duplex radio, millimeter-wave/THz communications, wireless powered communications, joint communication and radar sensing networks, satellite communications, and reconfigurable fluid antennas.
\end{IEEEbiography}

\begin{IEEEbiography}[{\includegraphics[width=1in,keepaspectratio]{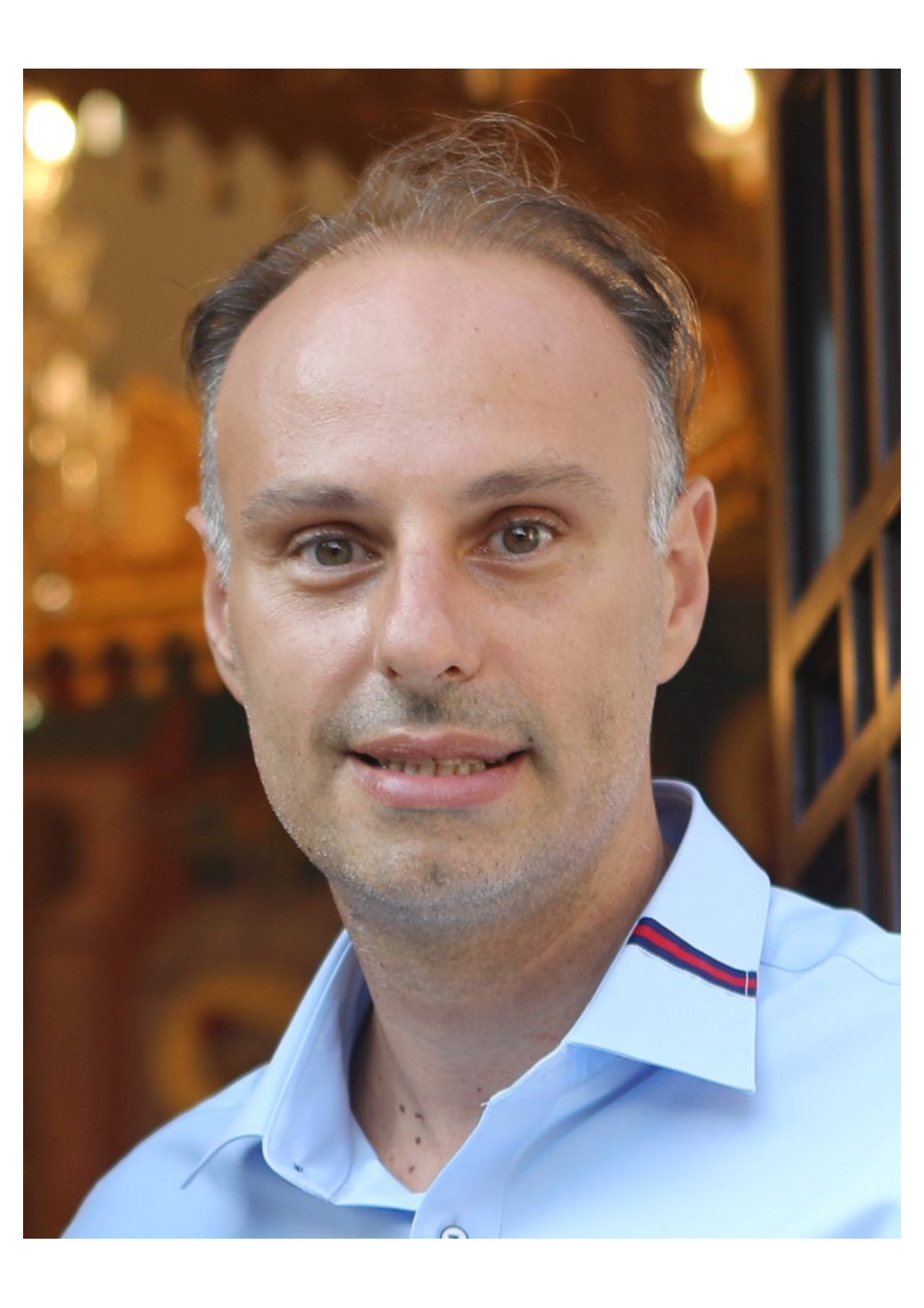}}]{Ioannis Krikidis} (F’19) received the diploma in Computer Engineering from the Computer Engineering and Informatics Department (CEID) of the University of Patras, Greece, in 2000, and the M.Sc and Ph.D degrees from École Nationale Supérieure des Télécommunications (ENST), Paris, France, in 2001 and 2005, respectively, all in Electrical Engineering. From 2006 to 2007 he worked, as a Post-Doctoral researcher, with ENST, Paris, France, and from 2007 to 2010 he was a Research Fellow in the School of Engineering and Electronics at the University of Edinburgh, Edinburgh, UK. He is currently a Professor at the Department of Electrical and Computer Engineering, University of Cyprus, Nicosia, Cyprus. His current research interests include wireless communications, quantum computing, 6G communication systems, wireless powered communications, and intelligent reflecting surfaces. He serves as an Associate Editor for IEEE Transactions on Wireless Communications. He was the recipient of the Young Researcher Award from the Research Promotion Foundation, Cyprus, in 2013, and the recipient of the IEEE ComSoc Best Young Professional Award in Academia, 2016, and IEEE Signal Processing Letters best paper award 2019. He has been recognized by the Web of Science as a Highly Cited Researcher for 2017-2021. He has received the prestigious ERC Consolidator Grant for his work on wireless powered communications.
\end{IEEEbiography}

\end{document}